\newtheorem{theorem}{Theorem}
\theoremstyle{definition}
\newtheorem{assumption}{Assumption}
\theoremstyle{remark}
\newtheorem{remark}{Remark}
\newcommand{\define}{ \stackrel{\Delta}{=} }
\newcommand{\lra}{ \longrightarrow }
\title{\LARGE{{Multiple Estimation Models for Discrete-time\\Adaptive Iterative Learning Control}}}
\author{Ram Padmanabhan\thanks{Department of Electrical and Computer Engineering and Coordinated Science Laboratory, University of Illinois Urbana-Champaign, Urbana, IL, USA. Email: \texttt{ramp3@illinois.edu.}}, Rajini Makam\thanks{Department of Aerospace Engineering, Indian Institute of Science, Bengaluru, India. Email: \texttt{rajinim@ieee.org.}}, and Koshy George\thanks{Department of Electrical, Electronics and Electronics Engineering, Gandhi Institute of Technology and Management (GITAM), Bengaluru, India. Email: \texttt{kgeorge@ieee.org.}}}
\date{}
\begin{document}

\maketitle

\begin{abstract}
This article focuses on making discrete-time Adaptive Iterative Learning Control more effective using multiple estimation models. Existing strategies use the tracking error to adjust the parametric estimates. Our strategy uses the last component of the identification error to tune these estimates of the model parameters. We prove that this strategy results in bounded estimates of the parameters, and bounded and convergent identification and tracking errors. We emphasize that the proof does not use the Key Technical Lemma. Rather, it uses the properties of square-summable sequences. We extend this strategy to include multiple estimation models and show that all the signals are bounded, and the errors converge. It is also shown that this works whether we switch between the models at every instant and every iteration or at the end of every iteration. Simulation results demonstrate the efficacy of the proposed method with a faster convergence using multiple estimation models.
\end{abstract}

\section{Introduction} \label{sec:Introduction}

Many practical, modern engineering systems require
that a reference trajectory be tracked for a specific finite interval,
and this task is then repeated for multiple iterations. Extensive
research has been dedicated to using Iterative Learning Control (ILC)
for such tasks. In the last two decades, ILC has evolved into a highly
popular strategy to achieve requirements of finite-interval, high
precision tracking control, yet simultaneously maintaining acceptable
levels of control energy \cite{Xu2011, ACM2007, KLM93, BTA2006, BX1998}. These
requirements cannot be achieved satisfactorily by standard
feedback control techniques. In particular, feedback control laws do
not update over iterations, and the error profile is identical
in every iteration. Further, feedback control guarantees only
asymptotic convergence of error, which is unsuitable when considering
finite-interval tracking. High-precision tracking using feedback
control requires prohibitively significant control energy. Each of these
issues is addressed by ILC.

The primary notion in ILC is that performance of a system can be
improved by learning from error and control signals of previous
iterations. Such a notion led to the design of numerous laws that
constructed the control input in iteration $(k+1)$ based directly on
the control input and error signal in iteration $k$. This eventually
developed into a contraction-mapping (CM), operator-theoretic
framework for ILC \cite{KLM93}. Many popular ILC strategies were
designed based on this framework \cite{BTA2006, BX1998, AKM84, CM2002,
  CWS97, CL1996}, and it continues to be popular, with numerous
applications in large-scale industrial manufacturing \cite{WP18,
  AA21}, chemical batch processes \cite{LO2020, LMKL21}, hybrid actuation systems \cite{CWOH2011} and robotics
\cite{OZG2006, CWC2020, CC2020, MA2021}. This framework has also been
extensively analyzed, with established convergence and robustness
results \cite{NG2002, AMC2007, DW1998, LB1996, SBT2022}.

An alternative framework for ILC is based on Composite Energy
Functions (CEFs), which are Lyapunov-like energy functions over
iterations. This framework is particularly useful when system
parameters are unknown, and ILC design must incorporate parameter
estimation. The approach closely follows adaptive control strategies,
with minor differences in parameter update laws \cite{XT2001, XT2002}. CEF-based ILC has
numerous advantages over the contraction-mapping approach to
ILC. First, the restrictive requirement of globally Lipschitz
nonlinearities can be relaxed. French and Rogers \cite{FR2000}
proposed one of the earliest CEF-based techniques to achieve this in
continuous-time Adaptive ILC. Further, it provides a unified method to
address nonlinear systems, systems subjected to disturbances, and
systems with time-varying parameters. This framework can also handle
iteration-varying reference trajectories and random initial conditions
on system states. The CEF approach to continuous-time Adaptive ILC was
formalized in a series of papers \cite{XT2001, XT2002, Xu2002,
  XTL2003, TC2007} in the early 2000s. In contrast to adaptive control, 
a prominent feature of the proposed strategies was the 
\emph{discrete} update of parameter estimates over iterations. Further,
monotonicity of energy functions was demonstrated, resulting in
pointwise convergence of tracking error. Continuous-time Adaptive ILC
has also been extensively applied to robot manipulators \cite{BX1998,
  AT2003, AT2004, LSWT19, WYC19}, high-speed trains \cite{HCMS21,
  YH21, LH21} and vibration control \cite{HM19, FL21}.

Using the analogy between the iteration axis and discrete-time axis, Chi \emph{et al.}
\cite{CHX2008} proposed a discrete-time Adaptive ILC
strategy using Composite Energy Functions for a nonlinear system
subjected to disturbances. The primary features of this strategy
included the ability to deal with iteration-varying reference
trajectories, random initial conditions on the system state and
time-varying system parameters. Applying the Key Technical Lemma (KTL)
\cite{Goodwin} over iterations, the convergence of tracking error was
demonstrated. This technique was further formalized in \cite{CSH2008,
  SLH2012, LZ2015}. In \cite{YZQ2012, YWQ2013, YS2012}, the problem that arises when the sign of the input coefficient is unknown was addressed. In \cite{YHH2016, YL2017} time- and iteration-varying
parameters were both in the problem setup, and a novel
dead-zone approach was proposed to tackle the additional
complexity. Learning control for a system with binary-valued
observations was achieved in \cite{BH18}. A dynamic linearization
framework was used in \cite{YHPD21} for Adaptive ILC on MIMO
systems. In \cite{YH20}, a predictive ILC scheme was used for learning
control of nonaffine, nonlinear systems.

The design and analysis of discrete-time Adaptive ILC closely follow
discrete-time adaptive control. 
Poor transient response is a significant issue in adaptive control and arises from using a single model for parameter estimation. (In what follows, a model is a mathematical representation of the given dynamical system.) A poor initial estimate 
can contribute to large initial tracking and identification
errors. (In this paper, the tracking error is the deviation of the system's state from the reference model state and the identification error is the deviation of the system's state from the state of the estimation model. The latter is defined explicitly in Section \ref{sec:SM-AILC}.) The Multiple Models, Switching and Tuning (MMST) methodology
\cite{NB1992, MM-CT, MM-DT} was proposed to combat this problem. By
initializing many estimation models in the parameter
space, one of these models is likely sufficiently close to
the actual parameter, resulting in improved identification and
tracking performance. Most Adaptive ILC schemes use the tracking error
to update parameter estimates, similar to certain discrete-time
adaptive control strategies. However, such an estimation law does not
lend itself well to the extension to multiple estimation
models. Adaptive control strategies which use the identification error
in place of tracking error for updating parameters have been explored
\cite{MG2021, MG2024}, and these strategies result in improved
convergence with multiple models. The objective of this article is to
present the MMST methodology in the context of discrete-time Adaptive
ILC, by modifying the control and identification laws.

There is very little existing research on using multiple models for
Adaptive ILC, particularly in the CEF framework. In \cite{LW2012, LWL2014}, Li \emph{et al.} present a strategy with multiple fuzzy
neural networks estimating part of the system's parameters. In \cite{FF2015}, Freeman and French use multiple estimation models in
the contraction-mapping setting to present robust stability
and performance-bounds results. In \cite{PBHMG21a, PBHMG21b}, the authors present MMST for Adaptive ILC in a contraction-mapping framework 
and Multiple Models with Second-Level Adaptation (MM-SLA) (see \cite{NH2011, MRG2018}) for Adaptive ILC to achieve 
lower computational complexity. In all the
above cases, however, iteration-varying references cannot be tracked,
and the standard estimation and certainty-equivalent control procedure
in Adaptive ILC are not used. Further, systems subjected to
disturbances are not addressed, and convergence is not demonstrated 
based on composite energy functions. 

In this article, we present a
complete framework for using multiple estimation models in Adaptive
ILC, addressing the above drawbacks. The primary contributions of this
article are as follows:
\begin{itemize}
\item{A new control with a single model
identification scheme are presented for discrete-time Adaptive ILC, and
the identification error (rather than tracking error) is used to
update parameter estimates.} 
\item{Convergence is proved using CEFs and the
properties of square-summable sequences rather than using the KTL.} 
\item{Next, a strategy with multiple estimation
models based on MMST is proposed. A complete overview of the control
and identification laws is provided, two switching schemes are
outlined, and convergence is proved in a unified manner for both
schemes, using the properties of square-summable sequences.} 
\item{Simulation results indicate that both the single model and multiple model
estimation schemes demonstrate satisfactory tracking performance. The multiple 
model scheme results in faster convergence of tracking
errors for linear time-invariant and time-varying systems, as well as nonlinear, discrete-time systems subjected to disturbances.}
\end{itemize}

The remainder of this article is organized as follows. In Section
\ref{sec:Formulation}, the general discrete-time Adaptive ILC problem
is introduced, with standard assumptions and remarks. Section
\ref{sec:SM-AILC} presents a new single estimation model solution to
this problem. This is extended to a strategy with multiple estimation
models in Section \ref{sec:MM-AILC}. We present simulation results for
the proposed strategies in Section \ref{sec:Simulations}, and
concluding remarks in Section \ref{sec:Conclusion}.

Throughout this article, $\mathbb{N}$ denotes the set of natural 
numbers $\{1,
2, \ldots\}$, and $\mathbb{R}^n$ denotes the vector space of all $n$-tuples of
real numbers. For a vector $y = \left[y_1, \ldots, y_n\right]^T \in
\mathbb{R}^n$, $\|y\|$ denotes the Euclidean norm, defined as
$$
\|y\| \define \sqrt{\sum_{i = 1}^{n} y_{i}^{2}}. 
$$
$\ell_2$ denotes the Hilbert space of all square-summable sequences,
i.e. all sequences $\{x_n\}_{n \in \mathbb{N}}$ such that
$$
\sum_{n} |x_{n}|^2 < \infty, 
$$
and $\ell_{\infty}$ denotes the Banach space of all bounded sequences,
i.e. all sequences $\{x_n\}_{n \in \mathbb{N}}$ such that
$$
\sup_n |x_n| < \infty.
$$

\section{Problem Formulation} \label{sec:Formulation}

In this section, we formulate the general discrete-time Adaptive ILC
problem. Consider the following discrete-time, nonlinear, uncertain
$n$th order system with matched, time-varying uncertainty:
\begin{align}
x_{i, k}(t+1) &= x_{i+1, k}(t), \hspace{0.5cm} i = 1, \ldots, (n-1),
\nonumber \\ 
x_{n, k}(t+1) &= \theta_{1}^{T}(t)\xi(x_k(t)) + b(t)u_k(t) +
d(t), \label{eq:Plant1} 
\end{align}
where $k \in \mathbb{N}$ denotes the index for iterations and each
iteration consists of samples indexed $\{0, 1, \ldots, T\}$. The time
index $t$ is within the set $\mathcal{I}_T = \{0, 1, \ldots,
T-1\}$. Note that this set does not include the final sample
$T$. $x_{i, k}(t)$ denotes state $i$ in iteration $k$ at sample
$t$. $x_k(t) = \left[x_{1, k}(t), \ldots, x_{n, k}(t)\right]^T \in
\mathbb{R}^n$ is the measurable state vector. $\theta_1(t) \in
\mathbb{R}^p$ is an unknown parameter vector, $b(t) \in \mathbb{R}$ is
the unknown input coefficient, and $d(t) \in \mathbb{R}$ is an unknown
exogenous disturbance. Each of these quantities is iteration
invariant. $\xi(x_k(t)) \equiv \xi_k(t) \in \mathbb{R}^p$, called the
\emph{regression vector}, is a known, bounded
nonlinear vector function of the
state $x_k(t)$, and $u_k(t) \in \mathbb{R}$ is the input to the system
in iteration $k$ and at sample $t$. Equation \eqref{eq:Plant1} can be
rewritten as follows:
\begin{align}
x_{i, k}(t+1) &= x_{i+1, k}(t), \hspace{0.5cm} i = 1, \ldots, (n-1), \nonumber \\
x_{n, k}(t+1) &= \theta^{T}(t)\phi_k(t), \label{eq:Plant}
\end{align}
where $\theta(t) \define \left[\theta_{1}^{T}(t), b(t), d(t)\right]^T
\in \mathbb{R}^{p+2}$ is the overall unknown parameter vector, and
$\phi_k(t) \define \left[\xi_{k}^{T}(t), u_k(t), 1\right]^T \in
\mathbb{R}^{p+2}$ is the overall known regression vector.

The objective of the Adaptive ILC problem is to design an appropriate
control input $u_k(t)$ such that the system state $x_k(t)$ tracks the
state $x_{m, k}(t)$ of the following stable, iteration-varying
reference model:
\begin{align}
x_{i, m, k}(t+1) &= x_{i+1, m, k}(t) \hspace{0.5cm} i = 1, \ldots,
(n-1), \nonumber \\ 
x_{n, m, k}(t+1) &= \rho_k(t), \label{eq:RefModel}
\end{align}
for some known $\rho_k(t)$, with asymptotic tracking over iterations
$k$. Defining the state tracking error $e_k(t) \define x_k(t) - x_{m,
  k}(t) = \left[e_{1, k}(t), \ldots, e_{n, k}(t)\right]^T \in
\mathbb{R}^n$, asymptotic tracking over iterations implies:
\begin{equation} \label{eq:Objective}
\lim_{k \lra \infty} e_k(t) = 0
\end{equation}
for each sample $t \in \mathcal{I}_T$.

The following assumptions are made:
\begin{assumption} \label{asm:Bounded}
The unknown quantities $\theta_1(t)$, $b(t)$ and $d(t)$ are bounded,
and hence the parameter vector $\theta(t)$ is bounded.
\end{assumption}
\begin{assumption} \label{asm:Sign}
The sign of $b(t)$ is known and invariant, i.e. $b(t)$ is either
positive or negative for all time $t$, and $b(t)$ is
non-singular. Without loss of generality, assume $b(t) \geq
b_{\mathrm{min}} > 0$. This assumption implies that the control
direction is known.
\end{assumption}

\begin{remark}
Throughout this article, the case with iteration invariant parameters
$\theta_1(t)$, $b(t)$ and $d(t)$, and hence iteration invariant
$\theta(t)$ is considered. This can be extended to the case with time-
and iteration-varying parameters, which was addressed in \cite{YL2017} using a novel dead-zone approach. Applying this to the
proposed techniques is an interesting avenue for future work on
this topic.
\end{remark}

\begin{remark}
Throughout this article, we do not assume identical initial conditions
on the plant and reference model. However, it has been observed in
\cite{CHX2008} that with random, non-zero initial conditions on the
plant \eqref{eq:Plant}, random non-zero initial errors will propagate,
and the state errors $e_{i, k}(t)$ for $i = 1, \ldots, (n-1)$ and $t =
0, \ldots, (n-i)$ cannot be `learned', as these errors are not
affected by the input $u_k(t)$. The remaining errors are dependent on
$u_k(t)$ and hence can be driven to zero. If the plant and reference
model have identical initial conditions, it can be shown that each
component of the identification and tracking error vectors converge to
zero.
\end{remark}

\begin{remark}
Assumption \ref{asm:Sign} can be relaxed by employing the technique of
discrete Nussbaum gain \cite{LN1986}, as explored in \cite{YZQ2012, YWQ2013, QGX2022}. An alternative approach without using the Nussbaum gain was
also explored in \cite{YS2012} by fully exploiting the convergence
properties of parameter estimates and incorporating two modifications
in the control and parameter update laws. Extending the techniques
proposed here by incorporating these approaches when the control
direction is unknown is a promising avenue for future work.
\end{remark}

\begin{remark}
In numerous works on discrete-time Adaptive ILC, an additional
assumption --- usually called the linear growth rate or sector-bounded
condition --- is made. This assumption states that the nonlinearity
$\xi_k(t)$ satisfies:
$$ \left\|\xi_k(t)\right\| \leq c_1 + c_2\left\|x_k(t)\right\|, $$
for some positive constants $c_1$ and $c_2$. This assumption plays a
key role in analyzing the convergence of the tracking error over
iterations as part of the assumptions for the KTL
\cite{Goodwin}. In contrast, our analysis does not involve the KTL, and hence we do not make this assumption.
\end{remark}

\section{A New Solution for Discrete-time Adaptive ILC} \label{sec:SM-AILC}

In this section, we formulate a new control law and parameter update
law for the problem formulated in Section
\ref{sec:Formulation}. Existing solutions mainly incorporate the
principle of certainty equivalence for control design and use the
tracking error for estimating and updating parameters. The
disadvantage of using the tracking error is that this strategy cannot
be extended to the use of multiple estimation models. Using the
identification error in place of tracking error has been explored for
discrete-time adaptive control in \cite{MG2021, MG2024}, with
results demonstrating improved convergence with multiple estimation
models. Further, the stability proofs do not invoke the KTL and instead use Lyapunov theory and the properties of
square-summable, or $\ell_2$ sequences. Here, we use the analogy
between the discrete-time and iteration axes to formulate a
corresponding Adaptive ILC strategy.

\subsection{Control and Identification Laws} \label{sec:SM-1}

Construct an identification model with state $\hat{x}_k(t)$:
\begin{align}
\hat{x}_{i, k}(t+1) &= \hat{x}_{i+1, k}(t), \hspace{0.5cm} i = 1,
\ldots (n-1), \nonumber \\ 
\hat{x}_{n, k}(t+1) &= \hat{\theta}_{1, k}^{T}(t)\xi(x_k(t)) +
\hat{b}_k(t)u_k(t) + \hat{d}_k(t) \nonumber \\ 
&= \hat{\theta}_{k}^{T}(t)\phi_k(t). \label{eq:IdentModelSM}
\end{align}
The purpose of establishing this identification model is to construct an identification error that is used to estimate the unknown parameter vector $\theta(t)$. $\hat{\theta}_{1, k}(t)$, $\hat{b}_k(t)$ and $\hat{d}_k(t)$ denote the
estimates of the quantities $\theta_1(t)$, $b(t)$ and $d(t)$ in
iteration $k$ and sample $t$. Correspondingly, $\hat{\theta}_{k}(t)$
denotes the estimate of the parameter vector $\theta(t)$. Define
$\Tilde{\theta}_k(t) \define \theta(t) - \hat{\theta}_k(t)$. Further, define the state identification error $\hat{e}_{k}(t) \define x_k(t) -
\hat{x}_k(t)$. Then, from \eqref{eq:Plant} and
\eqref{eq:IdentModelSM},
\begin{align}
\hat{e}_{i, k}(t+1) &= \hat{e}_{i+1, k}(t), \hspace{0.5cm} i = 1,
\ldots, (n-1), \nonumber \\ 
\hat{e}_{n, k}(t+1) &=
\Tilde{\theta}_{k}^{T}(t)\phi_k(t). \label{eq:IdentErrSM} 
\end{align}
Finally, the tracking error $e_k(t) = x_k(t) - x_{m, k}(t)$ can be
described by: 
\begin{align}
e_{i, k}(t+1) &= e_{i+1, k}(t), \hspace{0.5cm} i = 1, \ldots, (n-1),
\nonumber \\ 
e_{n, k}(t+1) &= \theta_{1}^{T}(t)\xi_k(t) + b(t)u_k(t) + d(t) -
\rho_k(t). \label{eq:TrackErr} 
\end{align}
Using \eqref{eq:TrackErr}, the following control
law is generated:
\begin{equation} \label{eq:ControlSM}
u_k(t) = \frac{1}{\hat{b}_{k}(t)}\left[\beta e_{n, k-1}(t+1) +
  \rho_k(t) - \hat{\theta}_{1, k}^{T}(t)\xi_k(t) - \hat{d}_k(t)\right],
\end{equation}
where $0 < \beta < 1$. Add and subtract $\hat{b}_k(t)u_k(t)$ in \eqref{eq:TrackErr}, substitute \eqref{eq:ControlSM} in
\eqref{eq:TrackErr} and use \eqref{eq:IdentErrSM}:
\begin{align}
e_{n, k}(t+1) &= \Tilde{\theta}_{k}^{T}(t)\phi_k(t) + \beta e_{n,
  k-1}(t+1) \nonumber \\ 
&= \hat{e}_{n, k}(t+1) + \beta e_{n, k-1}(t+1). \label{eq:SMError}
\end{align}
The presence of $\beta$ in the control law is to provide some damping
by incorporating previous iteration errors. Existing Adaptive ILC
schemes set $\beta = 0$, with no previous iteration tracking error
term, resulting in a deadbeat-like law.

The parameter vector estimate is updated according to the following
adaptive law:
\begin{equation} \label{eq:ParEstSM}
\hat{\theta}_{k+1}(t) = \mathrm{Proj}\left[\hat{\theta}_k(t) +
  \frac{\phi_k(t)}{1 + \left\|\phi_k(t)\right\|^2}\hat{e}_{n,
    k}(t+1)\right]. 
\end{equation}
Note that this law uses the identification error, in contrast to
existing Adaptive ILC strategies that use the tracking error for
updating parameters. This law is similar to the projection algorithm
widely used in adaptive control \cite{Goodwin}, except with the update
over iterations rather than time. The error $\hat{e}_{n, k}(t+1)$ is available
as the update is performed offline at the end of iteration $k$. The
operator $\mathrm{Proj}[.]$ is defined below. Define a vector $m$ as
follows:
$$
m \define \left[\hat{\theta}_k(t) + \frac{\phi_k(t)}{1 +
    \left\|\phi_k(t)\right\|^2}\hat{e}_{n, k}(t+1)\right] =
\left[m_{1}^{T}, m_2, m_3\right]^T, 
$$
where $m_{1} \in \mathbb{R}^p$, $m_2 \in \mathbb{R}$ and $m_3 \in \mathbb{R}$ denote the estimates of $\theta_1(t)$,
$b(t)$ and $d(t)$ prior to projection. Then,
\begin{equation} \label{eq:Projection}
\mathrm{Proj}[m] \define
\begin{dcases}
\left[m_{1}^{T}, m_2, m_3\right]^T & \text{if } m_2 \geq b_{\mathrm{min}} \\
\left[m_{1}^{T}, b_{\mathrm{min}}, m_3\right]^T & \text{if } m_2 < b_{\mathrm{min}}
\end{dcases}
.
\end{equation}
The use of the projection operator defined above ensures that division
by zero is avoided in the control law \eqref{eq:ControlSM}.

\begin{remark}
Throughout this article, the time index $t$ is always in the set
$\mathcal{I}_T$. The control \eqref{eq:ControlSM} and adaptive
\eqref{eq:ParEstSM} laws are defined on this time horizon. However,
note that all state variables and errors are formulated on the time
horizon $\{1, \ldots, T\}$, apart from their initial
conditions. Hence, state variables and errors use the index $(t+1)$
throughout, as evident from the control and adaptive laws
above.  Further, note that neither the control law nor the adaptive law
is defined at the final sample $T$. However, they affect the state variables 
and errors corresponding to this sample.
\end{remark}

\subsection{Convergence Analysis} \label{sec:SM-2}

We have the following result for convergence of the proposed Adaptive
ILC law:
\begin{theorem} \label{thm:SM}
For the system \eqref{eq:Plant} with the objective of tracking the
reference model \eqref{eq:RefModel}, the control law
\eqref{eq:ControlSM} along with the adaptive law \eqref{eq:ParEstSM}
guarantees the following:
\begin{enumerate}[label = { $\arabic*$. }]
\item{$\Tilde{\theta}(t), \hat{\theta}(t) \in \ell_\infty$ for each $t
  \in \mathcal{I}_T$, i.e. the sequence of parametric errors
  $\Tilde{\theta}_k(t)$ over iterations --- and hence the sequence of
  parameter estimates $\hat{\theta}_k(t)$ over iterations --- is
  bounded for each sample $t$.} 
\item{$\hat{e}_{n}(t+1) \in \ell_2\cap\ell_\infty$ for each $t \in
  \mathcal{I}_T$, i.e. the sequence of the $n$\text{\normalfont{th}}
  component of the identification error over iterations is
  square-summable and bounded for each sample $t$.} 
\item{With identical initial conditions on the plant and reference
  model, $\lim_{k \lra \infty} \hat{e}_k(t+1) = 0$ for each $t \in
  \mathcal{I}_T$, i.e. each component of the identification error
  vector tends to zero with iterations, for each sample $t$.} 
\item{With identical initial conditions on the plant and reference
  model, $\lim_{k \lra \infty} e_k(t+1) = 0$ for each $t \in
  \mathcal{I}_T$, i.e. each component of the tracking error vector
  tends to zero with iterations, for each sample $t$.} 
\item{$\lim_{k \lra \infty} \left\|\hat{\theta}_k(t) -
  \hat{\theta}_{k-p}(t)\right\|^2 = 0$, for each $t \in
  \mathcal{I}_T$, for any $p \in \mathbb{N}$, i.e. the parameter
  vector estimates converge over iterations for each sample $t$.} 
\end{enumerate}
\end{theorem}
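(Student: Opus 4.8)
The plan is to prove the five claims in order, with the real content in two steps --- a monotone composite energy function (CEF) over iterations, which gives Claims~1--2 up to a normalization, and a boundedness argument for the regressor $\phi_k(t)$ --- after which Claims~3--5 are routine. Fix $t\in\mathcal{I}_T$ throughout; all sequences below are in the iteration index $k$.

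\textbf{Step 1 (CEF over iterations).} Take $V_k=\|\tilde\theta_k(t)\|^2$. With $m=\hat\theta_k(t)+\phi_k(t)\,\hat e_{n,k}(t+1)/(1+\|\phi_k(t)\|^2)$ the pre-projection estimate in \eqref{eq:ParEstSM} and using $\tilde\theta_k^T(t)\phi_k(t)=\hat e_{n,k}(t+1)$ from \eqref{eq:IdentErrSM}, expanding the square gives
\begin{equation*}
\|\theta(t)-m\|^2=\|\tilde\theta_k(t)\|^2-\frac{\hat e_{n,k}^2(t+1)}{1+\|\phi_k(t)\|^2}\left(1+\frac{1}{1+\|\phi_k(t)\|^2}\right)\le\|\tilde\theta_k(t)\|^2-\frac{\hat e_{n,k}^2(t+1)}{1+\|\phi_k(t)\|^2}.
\end{equation*}
Since $\theta(t)$ lies in the closed half-space on which the $b$-entry is at least $b_{\mathrm{min}}$ (Assumption~\ref{asm:Sign}) and $\mathrm{Proj}[\cdot]$ is precisely the Euclidean projection onto that set, $\mathrm{Proj}$ does not increase distance to $\theta(t)$, so $V_{k+1}\le\|\theta(t)-m\|^2$. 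Hence $\{V_k\}$ is non-increasing and bounded below, therefore convergent --- this is Claim~1, with $\hat\theta_k(t)=\theta(t)-\tilde\theta_k(t)$ bounded by Assumption~\ref{asm:Bounded} --- and telescoping gives $\sum_k \hat e_{n,k}^2(t+1)/(1+\|\phi_k(t)\|^2)<\infty$; in particular the normalized error $\eta_k\define\hat e_{n,k}(t+1)/\sqrt{1+\|\phi_k(t)\|^2}$ lies in $\ell_2$, so $\eta_k\to0$.

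\textbf{Step 2 (boundedness of $\phi_k(t)$ --- the hard part).} I would show, by induction on $t\in\mathcal{I}_T$, that $\{x_k(t)\}_k$ and $\{\phi_k(t)\}_k$ are bounded, under the standing (mild) assumptions that $x_k(0)$ is bounded over $k$ and, by stability of \eqref{eq:RefModel}, that $\{x_{m,k}(t)\}_k$ and $\{\rho_k(t)\}_k$ are bounded. Given $\{x_k(s)\}_k$ bounded for $s\le t$: the cascade structure with $x_{n,k}(t+1)=\theta^T(t)\phi_k(t)$ and Assumption~\ref{asm:Bounded} makes $\{x_k(t+1)\}_k$ bounded once $\{\phi_k(t)\}_k$ is, and bounding $\{\phi_k(t)\}_k$ amounts to bounding $u_k(t)$, which through \eqref{eq:ControlSM}, $\hat b_k(t)\ge b_{\mathrm{min}}$, Claim~1 and boundedness of $\xi_k(t)=\xi(x_k(t))$ (continuity of $\xi$ on the bounded state set) reduces to bounding $B_{k-1}\define|e_{n,k-1}(t+1)|$ --- an apparent circularity. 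It breaks via a small-gain estimate: these bounds give $\|\phi_k(t)\|\le c_1+c_2B_{k-1}$, hence $|\hat e_{n,k}(t+1)|=|\eta_k|\sqrt{1+\|\phi_k(t)\|^2}\le|\eta_k|(1+c_1+c_2B_{k-1})$, and \eqref{eq:SMError} gives
\begin{equation*}
B_k\le|\hat e_{n,k}(t+1)|+\beta B_{k-1}\le|\eta_k|(1+c_1)+(|\eta_k|c_2+\beta)B_{k-1}.
\end{equation*}
Since $0<\beta<1$ and $\eta_k\to0$ by Step~1, the multiplier $|\eta_k|c_2+\beta$ is eventually $\le\gamma^\ast<1$ while the forcing term stays bounded, so a standard comparison argument yields $\sup_kB_k<\infty$, hence $\{\phi_k(t)\}_k$ is bounded. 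I expect this circularity-breaking estimate to be the crux of the whole proof; it is exactly the point where the finite horizon in $t$ and the $\ell_2$ bound from Step~1 let one dispense with the Key Technical Lemma.

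\textbf{Steps 3--5 (consequences).} With $\|\phi_k(t)\|$ bounded, $\hat e_{n,k}^2(t+1)=\eta_k^2(1+\|\phi_k(t)\|^2)$ is summable, giving Claim~2 ($\hat e_n(t+1)\in\ell_2\cap\ell_\infty$ and $\hat e_{n,k}(t+1)\to0$). For Claim~3, under identical initial conditions (initializing the identifier at the measured plant state) $\hat e_k(0)=0$, and iterating the shift relation in \eqref{eq:IdentErrSM} identifies every component $\hat e_{i,k}(t+1)$ with either some $\hat e_{n,k}(s+1)$, $s\in\mathcal{I}_T$, or with $0$, so $\hat e_k(t+1)\to0$. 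For Claim~4, \eqref{eq:SMError} is a scalar recursion in $k$ with gain $\beta\in(0,1)$ driven by $\hat e_{n,k}(t+1)\to0$, so $e_{n,k}(t+1)\to0$, and the same chasing of shifts in \eqref{eq:TrackErr} with $e_k(0)=0$ gives $e_k(t+1)\to0$. For Claim~5, from $\|\phi_k(t)\|/(1+\|\phi_k(t)\|^2)\le\tfrac12$ and non-expansiveness of $\mathrm{Proj}[\cdot]$ (which fixes the feasible iterate $\hat\theta_k(t)$) we get $\|\hat\theta_{k+1}(t)-\hat\theta_k(t)\|\le\tfrac12|\hat e_{n,k}(t+1)|$, and summing $p$ consecutive such bounds, $\|\hat\theta_k(t)-\hat\theta_{k-p}(t)\|\le\tfrac12\sum_{j=k-p}^{k-1}|\hat e_{n,j}(t+1)|\to0$ as $k\to\infty$, since each of the $p$ summands tends to zero.
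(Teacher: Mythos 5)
Your proof is correct and its skeleton --- the composite energy function $V_k(t)=\|\Tilde{\theta}_k(t)\|^2$, the non-expansiveness of $\mathrm{Proj}[\cdot]$, the telescoping to get an $\ell_2$ property, the subsequent treatment of \eqref{eq:SMError} as a stable iteration-domain recursion, and the telescoping/triangle-inequality argument for Claim~5 --- coincides with the paper's. The genuine difference is your Step~2. The paper passes directly from square-summability of $\alpha_k(t)\hat{e}_{n,k}(t+1)$ (your $\eta_k$, up to equivalent normalization) to the assertion that $\hat{e}_{n}(t+1)\in\ell_2\cap\ell_\infty$; that inference is valid only if $\alpha_k(t)$ is bounded away from zero, i.e.\ only if $\|\phi_k(t)\|$ is bounded over $k$, and the paper never establishes this (it is exactly the role the linear-growth condition and Key Technical Lemma play in the works the authors cite, both of which they deliberately avoid). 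Your induction on the finite horizon $t$, combined with the small-gain recursion $B_k\le|\eta_k|(1+c_1)+(|\eta_k|c_2+\beta)B_{k-1}$ exploiting $\eta_k\to0$ and $0<\beta<1$, closes precisely this gap; it does require you to make explicit a few mild standing assumptions (bounded initial states over $k$, bounded $\rho_k(t)$, continuity of $\xi$) that the paper leaves implicit, and it uses $\beta>0$ only inasmuch as $\beta<1$, so it degrades gracefully to the deadbeat case $\beta=0$. In short: same route, but your version supplies the one substantive step the paper's own proof omits, and identifying that step as the crux is exactly right.
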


\begin{proof}
The proof is organized into three parts. Part $1$ derives the
boundedness of $\Tilde{\theta}(t)$, Part $2$ demonstrates that all
errors converge to zero over iterations, and Part $3$ shows that
parameter vector estimates converge over iterations. Thus, statement
$1$ of the theorem is proved in Part $1$, statements $2$, $3$ and $4$
are proved in Part $2$, and statement $5$ is proved in Part $3$.

\vspace{0.5em} \noindent \emph{Part 1: Boundedness of Parametric Error:}

\noindent
Define a composite energy function (CEF) $V_k(t)$:
\begin{equation} \label{eq:CEFSM}
V_k(t) \define \Tilde{\theta}_{k}^{T}(t)\Tilde{\theta}_{k}(t) =
\left\|\Tilde{\theta}_k(t)\right\|^2. 
\end{equation}
Let $\Delta V_k(t) \define V_{k+1}(t) - V_k(t)$. Then, from
\eqref{eq:ParEstSM}, 
\begin{equation} \label{eq:L1}
\Delta V_k(t) = \Big\|\theta(t) - \mathrm{Proj}[m]\Big\|^2 -
\left\|\Tilde{\theta}_k(t)\right\|^2. 
\end{equation}
Consider the scalar $\left|b(t) - \mathrm{Proj}[m_2]\right|$, and note
that $b(t) \geq b_{\mathrm{min}}$. 
\begin{itemize}
\item{When $m_2 \geq b_{\mathrm{min}}$, $\mathrm{Proj}[m_2] =
  m_2$. Then, $\left|b(t) - \mathrm{Proj}[m_2]\right| =  \left|b(t) -
  m_2\right|$.}  
\item{When $m_2 < b_{\mathrm{min}} \leq b(t)$, $\mathrm{Proj}[m_2] =
  b_{\mathrm{min}}$. Then, $\left|b(t) - \mathrm{Proj}[m_2]\right| =
  \left|b(t) - b_{\mathrm{min}}\right| < \left|b(t) - m_2\right|$.} 
\end{itemize}
Thus, the relation $\left|b(t) - \mathrm{Proj}[m_2]\right| \leq
\left|b(t) - m_2\right|$ always holds. As $b(t)$ is simply part of the
parameter vector $\theta(t)$, the relation $\left\|\theta(t) -
\mathrm{Proj}[m]\right\| \leq \left\|\theta(t) - m\right\|$ always
holds. Hence, the parametric error magnitude does not increase
using the projection operator. Using this in \eqref{eq:L1},
\begin{equation} \label{eq:L2}
\Delta V_k(t) \leq \Big\|\theta(t) - m\Big\|^2 -
\left\|\Tilde{\theta}_k(t)\right\|^2. 
\end{equation}
Substitute $\displaystyle m = \left[\hat{\theta}_k(t) +
  \frac{\phi_k(t)}{1 + \left\|\phi_k(t)\right\|^2}\hat{e}_{n,
    k}(t+1)\right]$. Then,
\begin{equation} \label{eq:L3}
\Delta V_k(t) \leq \left\|\Tilde{\theta}_k(t) - \frac{\phi_k(t)}{1 +
  \left\|\phi_k(t)\right\|^2}\hat{e}_{n, k}(t+1)\right\|^2 -
\left\|\Tilde{\theta}_k(t)\right\|^2.
\end{equation}
On simplification by expanding the norm and using \eqref{eq:IdentErrSM}, this reduces to:
\begin{equation} \label{eq:DeltaVSM}
\Delta V_k(t) \leq - \left(\frac{2 +
  \left\|\phi_k(t)\right\|^2}{\left(1 +
  \left\|\phi_k(t)\right\|^2\right)^2}\right)\hat{e}_{n, k}^{2}(t+1), 
\end{equation}
or,
\begin{equation} \label{eq:DecreasingSM}
\Delta V_k(t) \leq - \alpha_{k}^{2}(t) \hat{e}_{n, k}^{2}(t+1) \leq 0,
\end{equation}
where $\alpha_{k}^{2}(t)$ denotes the positive quantity in parentheses
in \eqref{eq:DeltaVSM}. Thus, the function $V_k(t)$ is
non-increasing. From this and the construction of $V_k(t)$
\eqref{eq:CEFSM}, it is evident that $\Tilde{\theta}_k(t)$ is a
bounded sequence over iterations $k$, for every $t \in \mathcal{I}_T$,
i.e. $\Tilde{\theta}(t) \in \ell_\infty$. Subsequently, as $\theta(t)$
is bounded, the sequence of parameter estimates $\hat{\theta}_k(t)$
over iterations $k$ is bounded, i.e. $\hat{\theta}(t) \in
\ell_\infty$. This concludes Part $1$ of the proof.

\vspace{0.5em} \noindent \emph{Part 2: Convergence of Errors:}

\noindent
From \eqref{eq:DecreasingSM}, note that $\lim_{N \lra \infty}
\left|V_{N+1}(t) - V_1(t)\right| < \infty$ for each $t \in
\mathcal{I}_T$. This can be written as:
$$
\lim_{N \lra \infty} \left|\sum_{k = 1}^{N} \Delta V_k(t)\right| \leq
\lim_{N \lra \infty} \sum_{k = 1}^{N} \alpha_{k}^{2}(t)\hat{e}_{n,
  k}^{2}(t+1) < \infty.  $$
Then, the sequence $\alpha_k(t)\hat{e}_{n, k}(t+1)$ is square-summable
over iterations $k$, for each $t \in \mathcal{I}_T$. By the properties
of $\ell_2$ sequences, $\alpha(t)\hat{e}_n(t+1) \in
\ell_2\cap\ell_\infty$. 
Next, note that since $\xi_k(t)$ and $u_k(t)$ are bounded, so is 
$\left\|\phi_k(t)\right\|$ by definition. Hence,
$ \alpha_k(t)$  can never converge to $0$. 
Further,  $0 < \alpha_k(t) < \sqrt{2}$ and we have $\hat{e}_n(t+1) \in
\ell_2\cap\ell_\infty$, and hence,
\begin{equation} \label{eq:ConvIdentErrSM}
\lim_{k \lra \infty} \hat{e}_{n, k}(t+1) = 0
\end{equation}
for each $t \in \mathcal{I}_T$. Further, consider
eq. \eqref{eq:SMError}. This is an iteration-domain difference
equation, with a forcing function $\hat{e}_{n, k}(t+1) \lra 0$ as $k
\lra \infty$. As $0 < \beta < 1$, it is evident that:
\begin{equation} \label{eq:ConvTrackErrSM}
\lim_{k \lra \infty} e_{n, k}(t+1) = 0
\end{equation}
for each $t \in \mathcal{I}_T$. Finally, under the assumption of
identical initial conditions, \eqref{eq:ConvIdentErrSM} and
\eqref{eq:ConvTrackErrSM} imply that:
\begin{equation} \label{eq:ConvIdentErrVecSM}
\lim_{k \lra \infty} \hat{e}_k(t+1) = 0
\end{equation}
and
\begin{equation} \label{eq:ConvTrackErrVecSM}
\lim_{k \lra \infty} e_k(t+1) = 0
\end{equation}
for each $t \in \mathcal{I}_T$, i.e. the identification and tracking
error \emph{vectors} converge to $0$ as $k \lra \infty$. This
concludes Part $2$ of the proof.

\vspace{0.5em} \noindent \emph{Part 3: Convergence of Parameter Estimates:}

\noindent
Consider the update law \eqref{eq:ParEstSM}, and consider the scalar
$\left|\mathrm{Proj}[m_2] - \hat{b}_k(t)\right|$. From the preceding
update of parameter estimates, $\hat{b}_k(t) \geq b_{\mathrm{min}}$,
by the use of projection.
\begin{itemize}
\item{When $m_2 \geq b_{\mathrm{min}}$, $\mathrm{Proj}[m_2] = m_2$,
  and $\left|\mathrm{Proj}[m_2] - \hat{b}_k(t)\right| = \left|m_2 -
  \hat{b}_k(t)\right|$.} 
\item{When $m_2 < b_{\mathrm{min}} \leq \hat{b}_k(t)$,
  $\mathrm{Proj}[m_2] = b_{\mathrm{min}}$. Then,
  $\left|\mathrm{Proj}[m_2] - \hat{b}_k(t)\right| =
  \left|b_{\mathrm{min}} - \hat{b}_k(t)\right| < \left|m_2 -
  \hat{b}_k(t)\right|$.} 
\end{itemize}
Thus, the relation $\left|\mathrm{Proj}[m_2] - \hat{b}_k(t)\right|
\leq \left|m_2 - \hat{b}_k(t)\right|$ always holds. By extension, the
relation 
$$
\left\|\mathrm{Proj}[m] - \hat{\theta}_k(t)\right\| \leq
\left\|m - \hat{\theta}_k(t)\right\|
$$ 
always holds. Using
\eqref{eq:ParEstSM} and substituting $m$,
$$
\left\|\hat{\theta}_{k+1}(t) - \hat{\theta}_k(t)\right\|^2 \leq
\left\|\frac{\phi_k(t)}{1 + \left\|\phi_k(t)\right\|^2}\hat{e}_{n,
  k}(t+1)\right\|^2 \leq \hat{e}_{n, k}^{2}(t+1).
$$
Thus,
$$
\lim_{N \lra \infty} \sum_{k = 1}^{N} \left\|\hat{\theta}_{k+1}(t) -
\hat{\theta}_k(t)\right\|^2 \leq \lim_{N \lra \infty} \sum_{k = 1}^{N}
\hat{e}_{n, k}^2(t+1) < \infty, $$
or,
\begin{equation} \label{eq:ConvParEst1SM}
\lim_{k \lra \infty} \left\|\hat{\theta}_{k+1}(t) -
\hat{\theta}_k(t)\right\| = 0 
\end{equation}
for each $t \in \mathcal{I}_T$. Thus, parameter estimates one
iteration apart converge. This result can easily be extended to the
difference between parameter estimates $p$ iterations apart, as
follows:
\begin{align*}
&\left\|\hat{\theta}_{k}(t) - \hat{\theta}_{k-p}(t)\right\| \\
\hspace{0.3cm} & = \left\|\hat{\theta}_k(t) - \hat{\theta}_{k-1}(t) +
  \hat{\theta}_{k-1}(t) - \hat{\theta}_{k-2}(t) + \ldots + \hat{\theta}_{k-p+1}(t) -
  \hat{\theta}_{k-p}(t)\right\| \\ 
& \leq \left\|\hat{\theta}_k(t) -
  \hat{\theta}_{k-1}(t)\right\| + \left\|\hat{\theta}_{k-1}(t) -
  \hat{\theta}_{k-2}(t)\right\| + \ldots + \left\|\hat{\theta}_{k-p+1}(t) -
  \hat{\theta}_{k-p}(t)\right\|. 
\end{align*}
Thus, taking the limit as $k \lra \infty$ on both sides,
$$
\lim_{k \lra \infty} \left\|\hat{\theta}_{k}(t) -
\hat{\theta}_{k-p}(t)\right\| \leq \lim_{k \lra \infty}
\left\|\hat{\theta}_k(t) - \hat{\theta}_{k-1}(t)\right\| + \ldots + \lim_{k \lra \infty} \left\|\hat{\theta}_{k-p+1}(t) -
\hat{\theta}_{k-p}(t)\right\|. 
$$
As each limit on the right is zero from \eqref{eq:ConvParEst1SM}, and
the norm is always non-negative,
\begin{equation} \label{eq:ConvParEstPSM}
\lim_{k \lra \infty} \left\|\hat{\theta}_{k}(t) - \hat{\theta}_{k-p}(t)\right\| = 0,
\end{equation}
for each $t \in \mathcal{I}_T$, i.e. parameter estimates converge over
iterations. This concludes the proof of Theorem \ref{thm:SM}.
\end{proof}

In summary, this section has presented a new approach to solving the
discrete-time Adaptive ILC problem. A new control law with an
additional scaled tracking error term is formulated, and parameter
estimates are updated using the identification error rather than the
tracking error. It is then proved that each component of the identification and tracking
error vectors converges to $0$ with iterations $k$. The proof of
convergence does not involve the KTL. Instead, simple
inferences from the non-increasing nature of $V_k(t)$ are used
concurrently with properties of $\ell_2$ sequences. The approach
presented in this section also enables the extension to the multiple
estimation models case, as described in the following section.

\section{Multiple Estimation Models for Adaptive ILC} \label{sec:MM-AILC}

Adaptive control strategies can suffer from the poor transient performance
of identification, tracking and parametric errors when a single model
is used for parameter estimation. In particular, the
initial parametric uncertainty is likely large, contributing
significantly to poor transient response. The methodology of Multiple 
Models, Switching and Tuning (MMST) was proposed to address this 
issue \cite{NB1992, MM-CT, MM-DT}. The methodology in discrete-time
adaptive control is as follows. A number of models (say $M$) are
initialized in the parameter space with different initial
conditions. Each of these is updated according to standard parameter
estimation algorithms \cite{Goodwin} every sample. At each sample, one
model is chosen according to a criterion, and the parameter estimates
corresponding to that model are used for control design. The most
common criterion used is a minimum identification error criterion,
stated as follows. At each sample, pick the model $j^*$ that satisfies
$j^* = \mathrm{arg} \min_{j = 1, \ldots, M}
\left|\hat{e}_j(t)\right|$, where $\hat{e}_j(t)$ denotes the
identification error corresponding to model $j$ at time instant $t$.

As mentioned in Section \ref{sec:Introduction}, there is very little
existing research on the use of the MMST methodology in Adaptive
ILC. This section presents the main results of this article, designing
a general approach to using multiple models in Adaptive ILC, proposing
two strategies for switching between models and proving convergence
in both cases. Section \ref{sec:MM-1} describes the formulation of
control and identification laws for the proposed strategies, and
Section \ref{sec:MM-2} presents the proof of convergence of the
identification and tracking errors.

\subsection{Control and Identification Laws} \label{sec:MM-1}

The basic formulation of the problem remains the same as described in
Section \ref{sec:Formulation}. However, instead of a single
identification model \eqref{eq:IdentModelSM} as in Section
\ref{sec:SM-AILC}, we construct $M$ identification models. Let
$\mathcal{M} = \left\{1, \ldots, M\right\}$ denote the set of model
indices. Then, each model has a state $\hat{x}_{j, k}(t)$, $j \in
\mathcal{M}$, that evolves as follows:
\begin{align}
\hat{x}_{i, j, k}(t+1) &= \hat{x}_{i+1, j, k}(t), \hspace{0.5cm} i =
1, \ldots (n-1), \nonumber \\ 
\hat{x}_{n, j, k}(t+1) &= \hat{\theta}_{1, j, k}^{T}(t)\xi(x_k(t)) +
\hat{b}_{j, k}(t)u_k(t) + \hat{d}_{j, k}(t) \nonumber \\ 
&= \hat{\theta}_{j, k}^{T}(t)\phi_k(t). \label{eq:IdentModelMM}
\end{align}
$\hat{x}_{i, j, k}(t)$ denotes state $i$ of identification model $j$
in iteration $k$ and sample $t$, and $\hat{\theta}_{j, k}(t)$ denotes
the parameter estimate of model $j$ in iteration $k$ and sample
$t$. Define the $M$ parametric errors $\Tilde{\theta}_{j, k}(t)
\define \theta(t) - \hat{\theta}_{j, k}(t)$, and the $M$
identification errors $\hat{e}_{j, k}(t) \define x_k(t) - \hat{x}_{j,
  k}(t)$. Using \eqref{eq:Plant} and \eqref{eq:IdentModelMM},
\begin{align}
\hat{e}_{i, j, k}(t+1) &= \hat{e}_{i+1, j, k}(t), \hspace{0.5cm} i =
1, \ldots, (n-1), \nonumber \\ 
\hat{e}_{n, j, k}(t+1) &= \Tilde{\theta}_{j,
  k}^{T}(t)\phi_k(t). \label{eq:IdentErrMM} 
\end{align}
As before, the tracking error is described by eq. \eqref{eq:TrackErr}. 

We are now presented with two options:
\subsubsection{Case 1} Continue using the analogy between the
discrete-time axis in adaptive control and the iteration axis in
Adaptive ILC, and switch between models only once every iteration, at
the end. The criterion for switching is then chosen as: 
\begin{equation} \label{eq:Case1}
	j_{k}^{*} = \mathrm{arg} \min_{j \in \mathcal{M}}
        \left[\sum_{t \in \mathcal{I}_T}\left|\hat{e}_{n, j,
            k-1}(t+1)\right|^2\right],
\end{equation}
i.e. the model producing minimum energy in the $n$th component of the
identification error (and hence minimum energy in the identification
error vector) in iteration $(k-1)$ is chosen for control design in iteration
$k$.

\subsubsection{Case 2} Switch between models at every sample $t$ in
every iteration $k$. The criterion for switching is then chosen as: 
\begin{equation} \label{eq:Case2}
	j_{k}^{*}(t) = \mathrm{arg} \min_{j \in \mathcal{M}}
        \left|\hat{e}_{n, j, k}(t)\right|,
\end{equation}
i.e. at every sample, a new model is chosen based on the minimum
identification error at that sample, and is used for control design
at that sample.

\begin{remark}
In Case $2$, as identification error is on the time horizon $\left\{1,
\ldots, T\right\}$, so is the sequence of models
$j_{k}^{*}(t)$. However, the control design is on the horizon $t \in
\mathcal{I}_T$. Hence, the final model chosen, $j_{k}^{*}(T)$, is used
for designing $u_{k+1}(0)$, the initial control input of the next
iteration.
\end{remark}

Note how the best model $j^*$ depends only on iteration $k$ in Case
$1$, but depends on both iteration $k$ and time $t$ in Case $2$. The
control law can then be formulated as follows:
\begin{equation} \label{eq:ControlMM}
u_k(t) = \frac{1}{\hat{b}_{j^*, k}(t)}\left[\beta e_{n, k-1}(t+1) +
  \rho_k(t) - \hat{\theta}_{1, j^*, k}^{T}(t)\xi_k(t) - \hat{d}_{j^*, k}(t)\right], 
\end{equation}
where $0 < \beta < 1$, and $j^*$ denotes either $j_{k}^{*}$ or
$j_{k}^{*}(t)$, depending on whether criterion \eqref{eq:Case1} or
\eqref{eq:Case2} is being used. Evidently, the control law uses
parameter estimates corresponding to the model with minimum
identification error in the sense of either criterion. In iteration $1$,
model $1$ is chosen for control design without loss of generality. Note that all
models continue to be updated irrespective of which model is chosen in
\eqref{eq:ControlMM}. Substituting \eqref{eq:ControlMM} in
\eqref{eq:TrackErr}, and using \eqref{eq:IdentErrMM},
\begin{align}
e_{n, k}(t+1) &= \Tilde{\theta}_{j^*, k}^{T}(t)\phi_k(t) + \beta e_{n,
  k-1}(t+1) \nonumber \\ 
&= \hat{e}_{n, j^*,  k}(t+1) + \beta e_{n, k-1}(t+1). \label{eq:MMError}
\end{align}

\begin{algorithm}[!t] 
\caption{Computational flow with Multiple Models Case $1$.}
\hrule \vspace*{0.1in}
\begin{algorithmic} 
\STATE { \bf Initialisation}: 
$\hat{\theta}_{j,0}(t) \leftarrow \mathrm{random}$, $j \in \mathcal{M}$, $t \in \mathcal{I}_T$; $j_{1}^{*} = 1$.
\FOR{$k = 1, 2, \ldots$}
\FOR{$t = 0, 1, \ldots$}
\STATE Determine $\hat{x}_{j, k}(t+1)$ with $\phi_{k}(t)$ and $\hat{\theta}_{j, k}(t)$. 
\hfill \COMMENT {\eqref{eq:IdentModelMM}} 
\STATE Determine $\hat{e}_{n,j,k}(t)$, $j \in \mathcal{M}$  with $x_{n,k}(t)$, $\hat{x}_{n,j,k}(t)$.  
\hfill \COMMENT{\eqref{eq:IdentErrMM}}
\STATE Determine ${e}_{n,k}(t)$,  with $x_{n,k}(t)$, $x_{m,n,k}(t)$.  
\hfill \COMMENT{\eqref{eq:TrackErr}}
\STATE Compute $u_k(t)$ with $\rho_{k}(t)$, $\hat{\theta}_{j^{\ast},k}(t)$
and $e_{n,k-1}(t+1)$. 
\hfill \COMMENT {\eqref{eq:ControlMM}} 
\ENDFOR
\STATE Compute $\hat{\theta}_{j,k+1}(t)$ with $\hat{\theta}_{j, k}(t)$, $\phi_{k}(t)$ and $\hat{e}_{n,j,k}(t+1)$.
\hfill \COMMENT{\eqref{eq:ParEstMM}}
\STATE Determine $j_{k+1}^{*}$ using $\hat{e}_{n,j,k}(t+1)$.  
\hfill \COMMENT{\eqref{eq:Case1}}
\ENDFOR
\end{algorithmic}
\vspace*{0.1in}
\hrule
\label{alg:MM1}
\end{algorithm}

\begin{algorithm}[!t] 
\caption{Computational flow with Multiple Models Case $2$.}
\hrule \vspace*{0.1in}
\begin{algorithmic} 
\STATE { \bf Initialisation}: 
$\hat{\theta}_{j,0}(t) \leftarrow \mathrm{random}$, $j \in \mathcal{M}$; $j_{1}^{*}(t) = 1$, $t \in \mathcal{I}_T$
\FOR{$k = 1, 2, \ldots$}
\FOR{$t = 0, 1, \ldots$}
\STATE Determine $\hat{x}_{j, k}(t+1)$ with $\phi_{k}(t)$ and $\hat{\theta}_{j, k}(t)$. 
\hfill \COMMENT {\eqref{eq:IdentModelMM}} 
\STATE Determine $\hat{e}_{n,j,k}(t)$, $j \in \mathcal{M}$  with $x_{n,k}(t)$, $\hat{x}_{n,j,k}(t)$.  
\hfill \COMMENT{\eqref{eq:IdentErrMM}}
\STATE Determine ${e}_{n,k}(t)$,  with $x_{n,k}(t)$, $x_{m,n,k}(t)$.  
\hfill \COMMENT{\eqref{eq:TrackErr}}
\STATE Determine $j_{k}^{*}(t)$ using $\hat{e}_{n,j,k}(t)$.
\hfill \COMMENT{\eqref{eq:Case2}}
\STATE Compute $u_k(t)$ with $\rho_{k}(t)$, $\hat{\theta}_{j^{\ast},k}(t)$
and $e_{n,k-1}(t+1)$. 
\hfill \COMMENT {\eqref{eq:ControlMM}}
\ENDFOR
\STATE Compute $\hat{\theta}_{j,k+1}(t)$ with $\hat{\theta}_{j, k}(t)$, $\phi_{k}(t)$ and $\hat{e}_{n,j,k}(t+1)$.
\hfill \COMMENT{\eqref{eq:ParEstMM}}
\ENDFOR
\end{algorithmic}
\vspace*{0.1in}
\hrule
\label{alg:MM2}
\vspace{-0.1in}
\end{algorithm}

Each model $j \in \mathcal{M}$ is updated according to the following
law, similar to \eqref{eq:ParEstSM}:
\begin{align}
\hat{\theta}_{j, k+1}(t) &= \mathrm{Proj}[m] = \mathrm{Proj}\left[\hat{\theta}_{j, k}(t) + \frac{\phi_k(t)}{1 +
    \left\|\phi_k(t)\right\|^2}\hat{e}_{n, j,
    k}(t+1)\right], \label{eq:ParEstMM} 
\end{align}
where $\mathrm{Proj}[.]$ is defined in \eqref{eq:Projection}. The
algorithms \ref{alg:MM1} and \ref{alg:MM2} summarize the
above procedure for both Case $1$ and $2$.

\subsection{Convergence Analysis} \label{sec:MM-2}

We now present the primary result of this article for convergence of
Adaptive ILC using multiple models:
\begin{theorem} \label{thm:MM}
For the system \eqref{eq:Plant} with to track the
reference model \eqref{eq:RefModel}, the control law
\eqref{eq:ControlMM} along with the adaptive law \eqref{eq:ParEstMM}
guarantees the following:
\begin{enumerate}[label = { $\arabic*$. }]
\item{ $\Tilde{\theta}_j(t), \hat{\theta}_j(t) \in \ell_\infty$ for
  each $t \in \mathcal{I}_T$, for each $j \in \mathcal{M}$, i.e. the
  sequence of parametric errors $\Tilde{\theta}_{j, k}(t)$ over
  iterations --- and hence the sequence of parameter estimates
  $\hat{\theta}_{j, k}(t)$ over iterations --- is bounded for each
  sample $t$ and model $j$.} 
\item{$\hat{e}_{n, j}(t+1) \in \ell_2\cap\ell_\infty$ for each $t \in
  \mathcal{I}_T$, for each $j \in \mathcal{M}$, i.e. the sequence of
  the $n$\text{\normalfont{th}} component of the identification error
  over iterations is square-summable and bounded for each sample $t$
  and model $j$.} 
\item{With identical initial conditions on the plant and reference
  model, $\lim_{k \lra \infty} \hat{e}_{j, k}(t+1) = 0$ for each $t
  \in \mathcal{I}_T$, for each $j \in \mathcal{M}$, i.e. each
  component of the identification error vector tends to zero with
  iterations, for each sample $t$ and model $j$.} 
\item{With identical initial conditions on the plant and reference
  model, $\lim_{k \lra \infty} e_k(t+1) = 0$ for each $t \in
  \mathcal{I}_T$, i.e. each component of the tracking error vector
  tends to zero with iterations, for each sample $t$.} 
\item{$\lim_{k \lra \infty} \left\|\hat{\theta}_{j, k}(t) -
  \hat{\theta}_{j, k-p}(t)\right\|^2 = 0$ for each $t \in
  \mathcal{I}_T$, for each $j \in \mathcal{M}$, for any $p \in
  \mathbb{N}$, i.e. the parameter vector estimates converge over
  iterations for each sample $t$ and model $j$.} 
\end{enumerate}
\end{theorem}

\begin{proof}
As with the proof of Theorem \ref{thm:SM}, the proof of Theorem
\ref{thm:MM} is organized in three parts, with the statement $1$ proved in
Part $1$, statements $2$, $3$ and $4$ proved in Part $2$ and statement
$5$ proved in Part~$3$.

\vspace{0.5em} \noindent \emph{Part 1: Boundedness of Parametric Error:}

\noindent
Define a composite energy function (CEF) $V_k(t)$ as:
\begin{subequations} \label{eq:CEFMM}
\begin{align}
V_k(t) &\define \sum_{j \in \mathcal{M}} V_{j, k}(t), \\
V_{j, k}(t) &\define \Tilde{\theta}_{j, k}^{T}(t)\Tilde{\theta}_{j,
  k}(t) = \left\|\Tilde{\theta}_{j, k}(t)\right\|^2, 
\end{align}
\end{subequations}
and let $\Delta V_{j, k}(t) \define V_{j, k+1}(t) - V_{j, k}(t)$. By
arguments similar to the ones made in the proof of Theorem
\ref{thm:SM},
\begin{equation} \label{eq:DeltaVMM}
\Delta V_{j, k}(t) \leq - \left(\frac{2 +
  \left\|\phi_k(t)\right\|^2}{\left(1 +
  \left\|\phi_k(t)\right\|^2\right)^2}\right)\hat{e}_{n, j,
  k}^{2}(t+1), 
\end{equation}
or,
\begin{equation} \label{eq:DecreasingMM1}
\Delta V_{j, k}(t) \leq - \alpha_{k}^{2}(t) \hat{e}_{n, j, k}^{2}(t+1) \leq 0,
\end{equation}
where, as before, $\alpha_{k}^{2}(t)$ denotes the positive quantity
within parentheses in \eqref{eq:DeltaVMM}. Hence, $V_{j, k}(t)$ is
non-increasing for each $j$, and thus,
\begin{equation} \label{eq:DecreasingMM2}
\Delta V_{k}(t) \define V_{k+1}(t) - V_k(t) = \sum_{j \in \mathcal{M}}
\Delta V_{j, k}(t) \leq 0, 
\end{equation}
or, $V_k(t)$ is a non-increasing function. From the construction of
$V_{j, k}(t)$, it is evident that the sequence of parametric errors
$\Tilde{\theta}_{j, k}(t)$ is a bounded sequence over iterations $k$,
for each $t \in \mathcal{I}_T$, i.e. $\Tilde{\theta}_j(t) \in
\ell_\infty$. As $\theta(t)$ is bounded, we conclude that
$\hat{\theta}_j(t) \in \ell_\infty$, i.e. the sequence of parameter
estimates $\hat{\theta}_{j, k}(t)$ is bounded over iterations $k$, for
each $t \in \mathcal{I}_T$.

\vspace{0.5em} \noindent \emph{Part 2: Convergence of Errors:}

\noindent
From \eqref{eq:DecreasingMM1}, $\lim_{N \lra \infty} \left|V_{j,
  N+1}(t) - V_{j, 1}(t)\right| < \infty$ for each $t \in
\mathcal{I}_T$. Rewriting this,
$$
\lim_{N \lra \infty} \left|\sum_{k = 1}^{N} \Delta V_{j, k}(t)\right|
\leq \lim_{N \lra \infty} \sum_{k = 1}^{N}
\alpha_{k}^{2}(t)\hat{e}_{n, j, k}^{2}(t+1) < \infty. 
$$
Using the same arguments as earlier, $\hat{e}_{n, j}(t+1) \in \ell_2
\cap \ell_\infty$, and
\begin{equation} \label{eq:ConvIdentErrMM}
\lim_{k \lra \infty} \hat{e}_{n, j, k}(t+1) = 0
\end{equation}
for each $t \in \mathcal{I}_T$, for each model $j \in
\mathcal{M}$. Under the assumption of identical initial conditions,
this implies that:
\begin{equation} \label{eq:ConvIdentErrVecMM}
\lim_{k \lra \infty} \hat{e}_{j, k}(t+1) = 0.
\end{equation}

Now consider eq. \eqref{eq:MMError}. While we know that $\hat{e}_{n,
  j, k}(t+1) \lra 0$ as $k \lra \infty$, the actual sequence
$\hat{e}_{n, j^*, k}(t+1)$ that acts as a forcing function here
depends on the switching criterion considered, either \eqref{eq:Case1}
or \eqref{eq:Case2}. We now show that $\hat{e}_{n, j^*, k}(t+1) \lra
0$ as $k \lra \infty$, where $j^*$ denotes $j_{k}^{*}$ or
$j_{k}^{*}(t)$, depending on whether criterion \eqref{eq:Case1} or
\eqref{eq:Case2} is used. For notational simplicity, let $\hat{e}_{j,
  k}$ denote $\hat{e}_{n, j, k}(t+1)$. Construct the following
sequence at each sample $t$:
\begin{equation} \label{eq:Sequence}
S = \underbrace{\hat{e}_{1, 1}, \ldots, \hat{e}_{M, 1}}_{j \in
  \mathcal{M}, k = 1}, \underbrace{\hat{e}_{1, 2}, \ldots, \hat{e}_{M, 2}}_{j 
  \in \mathcal{M}, k = 2}, \ldots, \underbrace{\hat{e}_{1, k}, \ldots,
  \hat{e}_{M, k}}_{j \in \mathcal{M}, k = k}, \ldots 
\end{equation}
This is a sequence of identification errors of each model, considered
one iteration after another. Since $\hat{e}_{j, k} \lra 0$ as $k \lra
\infty$, the above sequence $S \lra 0$ as $k \lra \infty$. Then, if
$S^*$ denotes any subsequence of $S$, $S^* \lra 0$ as $k \lra
\infty$. We exploit this fact to show that using either criterion
\eqref{eq:Case1} or \eqref{eq:Case2}, the forcing function
$\hat{e}_{n, j^*, k}(t+1)$ in \eqref{eq:MMError} converges to $0$ as
$k \lra \infty$.

With criterion \eqref{eq:Case1}, switching takes place only once every
iteration, at the end. The optimal model $j_{k}^{*}$ does not depend
on the sample $t$. Then, the forcing function sequence in
\eqref{eq:MMError} can be written as $S^* = \hat{e}_{j_{1}^{*}, 1},
\hat{e}_{j_{2}^{*}, 2}, \ldots$, with each $j_{k}^{*} \in
\mathcal{M}$. $S^*$ is evidently a subsequence of $S$, and as $S \lra
0$ as $k \lra \infty$, $S^* \lra 0$ as $k \lra \infty$, and hence
$\hat{e}_{n, j^*, k}(t+1) \lra 0$ as $k \lra \infty$, for each $t$.

The arguments for criterion \eqref{eq:Case2} are very similar. The
optimal model $j_{k}^{*}(t)$ is now dependent on the sample $t$. For a
given $t$, the forcing function sequence in \eqref{eq:MMError} can be
written as $S^*(t) = \hat{e}_{j_{1}^{*}(t), 1}, \hat{e}_{j_{2}^{*}(t),
  2}, \ldots$, with each $j_{k}^{*}(t) \in \mathcal{M}$. $S^*(t)$ is a
subsequence of $S$ for each $t$, and by the above arguments, $S^*(t)
\lra 0$ as $k \lra \infty$, and hence $\hat{e}_{n, j^*, k}(t+1) \lra
0$ as $k \lra \infty$, for each $t$.

The minor difference between the two arguments lies in the fact that
the subsequence constructed depends on the sample $t$ in the second
case. For both criteria \eqref{eq:Case1} and \eqref{eq:Case2},
\begin{equation} \label{eq:ForcFunc}
\lim_{k \lra \infty} \hat{e}_{n, j^*, k}(t+1) = 0
\end{equation}
for each $t \in \mathcal{I}_T$. Then, eq. \eqref{eq:MMError} is an
iteration-domain difference equation with a forcing function that
converges to $0$. As $0 < \beta < 1$, it is evident that:
\begin{equation} \label{eq:ConvTrackErrMM}
\lim_{k \lra \infty} e_{n, k}(t+1) = 0
\end{equation}
for each $t \in \mathcal{I}_T$. Under the assumption of identical
initial conditions,
\begin{equation} \label{eq:ConvTrackErrVecMM}
\lim_{k \lra \infty} e_k(t+1) = 0.
\end{equation}

\vspace{0.5em} \noindent \emph{Part 3: Convergence of Parameter Estimates:}

\noindent
The final part of the proof is straightforward and simply extends the
arguments made in the corresponding part of the proof of Theorem
\ref{thm:SM} to the multiple model case. It can first be shown that
the relation $\left\|\mathrm{Proj}[m] - \hat{\theta}_{j, k}(t)\right\|
\leq \left\|m - \hat{\theta}_{j, k}(t)\right\|$ always holds, for each
model $j \in \mathcal{M}$. Then, using \eqref{eq:ParEstMM},
$$
\left\|\hat{\theta}_{j, k+1}(t) - \hat{\theta}_{j, k}(t)\right\|^2
\leq \hat{e}_{n, j, k}^{2}(t+1). 
$$
Then, summing the above inequality over iterations $k$ and using
the properties of the $\ell_2$ sequence $\hat{e}_{n, j}(t+1)$,
\begin{equation} \label{eq:ConvParEst1MM}
\lim_{k \lra \infty} \left\|\hat{\theta}_{j, k+1}(t) -
\hat{\theta}_{j, k}(t)\right\| = 0 
\end{equation}
for each $t \in \mathcal{I}_T$. Thus, for each model $j \in
\mathcal{M}$, parameter estimates one iteration apart converge for
each sample $t$. For parameter estimates $p$ iterations apart,
$\left\|\hat{\theta}_{j, k}(t) - \hat{\theta}_{j, k-p}(t)\right\|$ is
written as a telescoping series, as shown earlier. By the same
arguments,
\begin{equation} \label{eq:ConvParEstPMM}
\lim_{k \lra \infty} \left\|\hat{\theta}_{j, k}(t) - \hat{\theta}_{j,
  k-p}(t)\right\| = 0 
\end{equation}
for each $t \in \mathcal{I}_T$, for each model $j \in
\mathcal{M}$. This concludes the proof of Theorem \ref{thm:MM}.
\end{proof}

Summarizing the results of this section, we have presented an approach
using multiple estimation models to solve the discrete-time Adaptive
ILC problem. This approach is enabled by using each model's
identification error in updating the corresponding parameter
estimates. The control law is formulated based on the optimal model at
sample $t$, in iteration $k$. We provide two options for switching
between models --- either once in an iteration or once every sample
--- and each option has its own criterion. Using either criterion, we 
prove that each component of the identification and tracking error vectors 
converge to
$0$ with iterations $k$. A key step in this proof is to show that the
sequence of identification errors corresponding to the best model ---
$\hat{e}_{n, j^*, k}(t+1)$ --- converges to $0$ as $k \lra \infty$,
using either criterion. As with the strategy in Section
\ref{sec:SM-AILC}, the proof of convergence does not involve the KTL and the properties of $V_k(t)$ and $\ell_2$ sequences
are used instead.

\section{Simulation Examples} \label{sec:Simulations}

In this section, we present simulation examples to demonstrate the
efficacy of the single-model strategy proposed in Section
\ref{sec:SM-AILC}, and the two switching strategies with multiple
models in Section \ref{sec:MM-AILC}. Four different first-order
systems are considered: a linear, time-invariant system not subjected
to disturbances (LTI), a linear, time-varying system subjected to
disturbances (LTV-D), a nonlinear system not subjected to disturbances
(NL) and a nonlinear system subjected to disturbances (NL-D). In each
example, the time interval for each iteration is $\left\{0, 1, \ldots,
100\right\}$, and hence the time index $t$ is in the set
$\mathcal{I}_T = \left\{0, 1, \ldots, 99\right\}$. The parameter
$\beta$ in the control laws \eqref{eq:ControlSM} and
\eqref{eq:ControlMM} is set to $0.2$. Zero initial conditions on the
plant and reference are assumed in all examples. For the multiple-model 
cases, the number of models is set to $M = 10$, and parameters
are initialized randomly in the parameter space. The strategy that
applies the single model control law \eqref{eq:ControlSM} is
designated ``SM'', and the strategies that use the multiple-model
control law \eqref{eq:ControlMM} with criterion \eqref{eq:Case1} or
\eqref{eq:Case2} are designated ``MM - Case $1$'' and ``MM - Case
$2$'' respectively. For each example, the objective is to track the
following iteration-invariant reference:
\begin{equation} \label{eq:RefTraj}
x_m(t) = \pi^2\left(2 - 3\sin^3(2\pi t/100)\right)\sin(2\pi t/100)/10,
\end{equation}
which is similar to the trajectory considered in
\cite{CHX2008}. The efficacy of each strategy is measured based on the
peak identification and tracking errors over iterations, which ideally
converge to zero. This is the same as considering the $\infty$-norm of
both errors, defined below for the identification error:
\begin{equation} \label{eq:InftyNorm}
\left\|\hat{\mathbf{e}}_k\right\|_{\infty} = \max_t \left|x_k(t+1) -
\hat{x}_k(t+1)\right| = \max_t \left|\hat{e}_k(t+1)\right|, 
\end{equation}
and defined similarly for the tracking error. An iteration-invariant
trajectory is considered for simplicity, to highlight the advantages
of faster convergence in multiple models. The final example in this
section presents results for tracking an iteration-varying trajectory.

\begin{figure}[!t]
\centering

\begin{subfigure}{0.49\textwidth}
	\includegraphics[width = \textwidth]{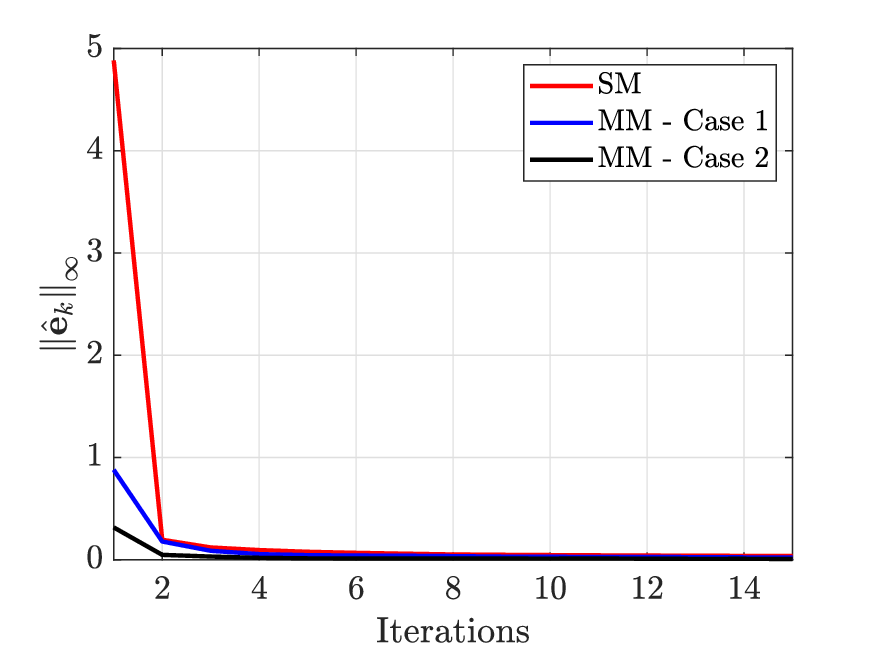}
	\caption{Maximum Identification Error over iterations.}
\end{subfigure}
\hfill
\begin{subfigure}{0.49\textwidth}
	\includegraphics[width = \textwidth]{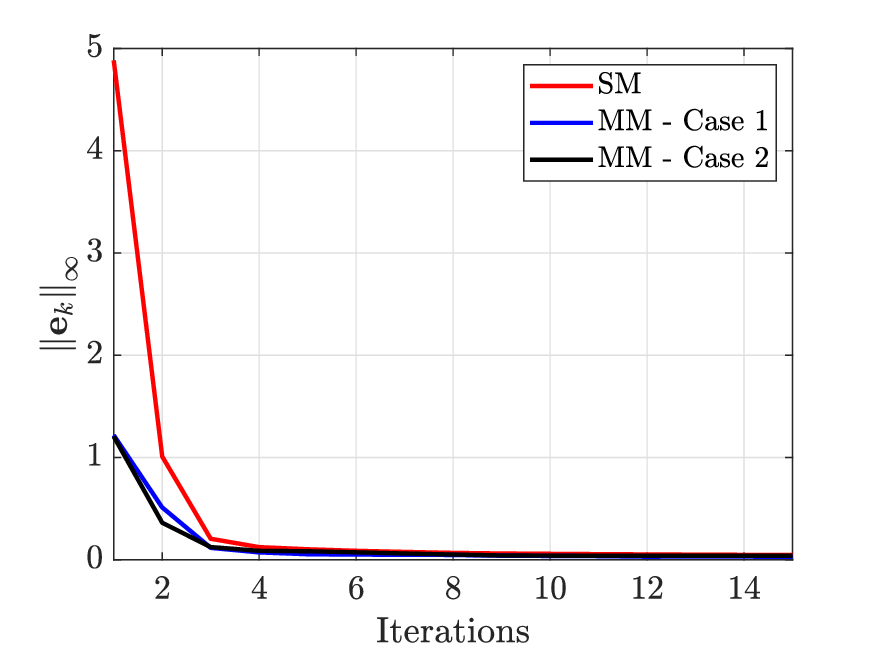}
	\caption{Maximum Tracking Error over iterations.}
\end{subfigure}

\caption{Example 1: Error profiles over iterations for an LTI system without disturbances.}
\label{fig:LTI}
\hrulefill
\vspace{-0.3cm}

\end{figure}

\subsection{Example 1: LTI System without Disturbances} \label{sec:LTI}
Consider the system:
\begin{equation} \label{eq:LTI}
	x_k(t+1) = 0.5x_k(t) + u_k(t),
\end{equation}
a simple, stable LTI system without disturbances. The objective is for
$x_k(t)$ to track the reference $x_m(t)$ in \eqref{eq:RefTraj}. The
results for identification and tracking performance are shown in
Fig.~\ref{fig:LTI}, in terms of the peak amplitude of errors over
iterations for each strategy. It is evident that both multiple-model
strategies converge faster than the single-model strategy, mainly
because the initialization of multiple estimation models leads to
better estimates in earlier iterations, hence improving transient
performance. Further, the multiple model strategy with criterion
\eqref{eq:Case2}, i.e. MM - Case $2$ converges marginally faster than
MM - Case $1$, due to models switching more frequently.

\begin{figure}[!t]
\centering

\begin{subfigure}{0.49\textwidth}
	\includegraphics[width = \textwidth]{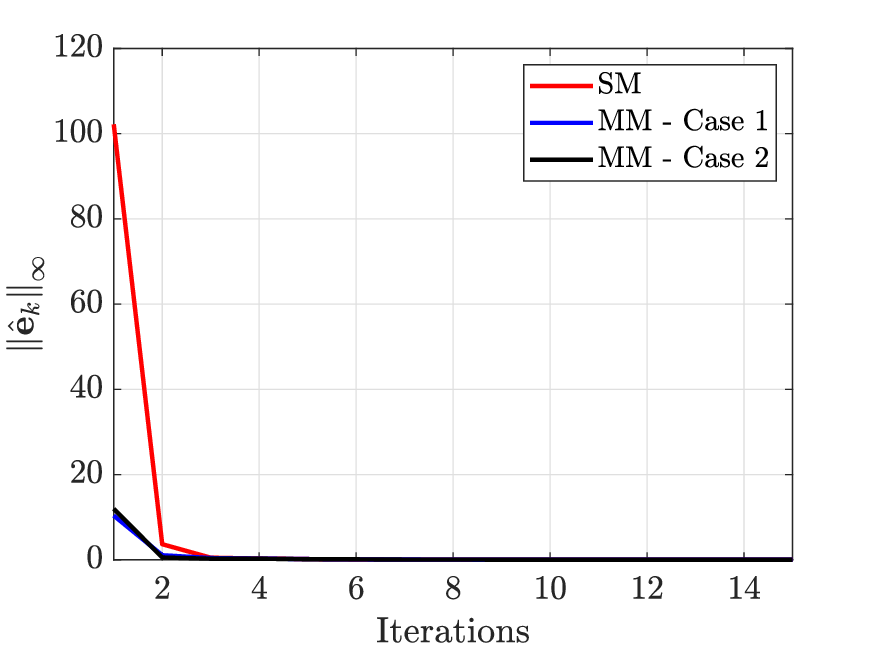}
	\caption{Maximum Identification Error over iterations.}
\end{subfigure}
\hfill
\begin{subfigure}{0.49\textwidth}
	\includegraphics[width = \textwidth]{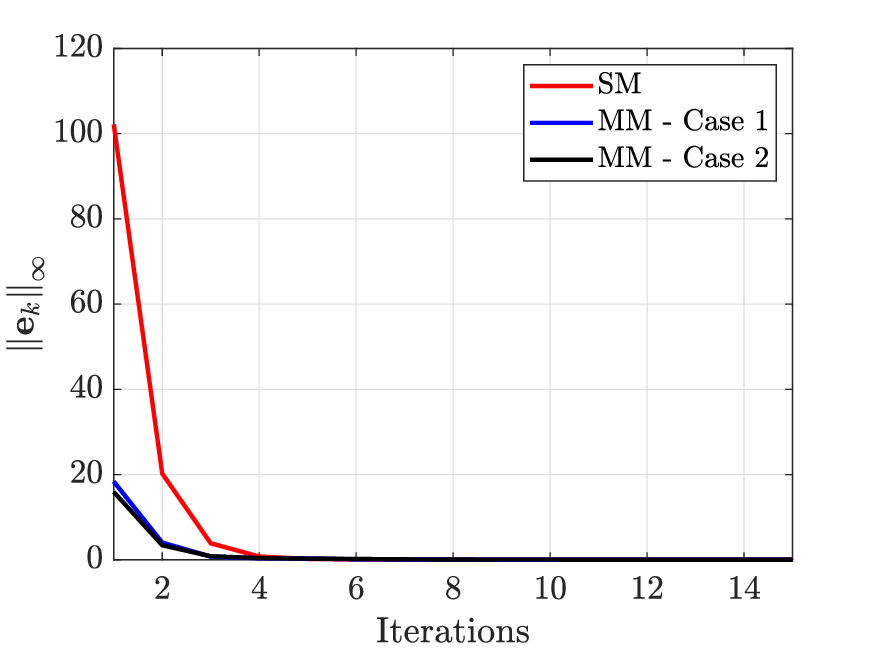}
	\caption{Maximum Tracking Error over iterations.}
\end{subfigure}

\caption{Example 2: Error profiles over iterations for an LTV system with disturbances.}
\label{fig:LTV-D}
\hrulefill
\vspace{-0.3cm}

\end{figure}

\begin{figure}[!t]
\centering

\begin{subfigure}{0.49\textwidth}
	\includegraphics[width = \textwidth]{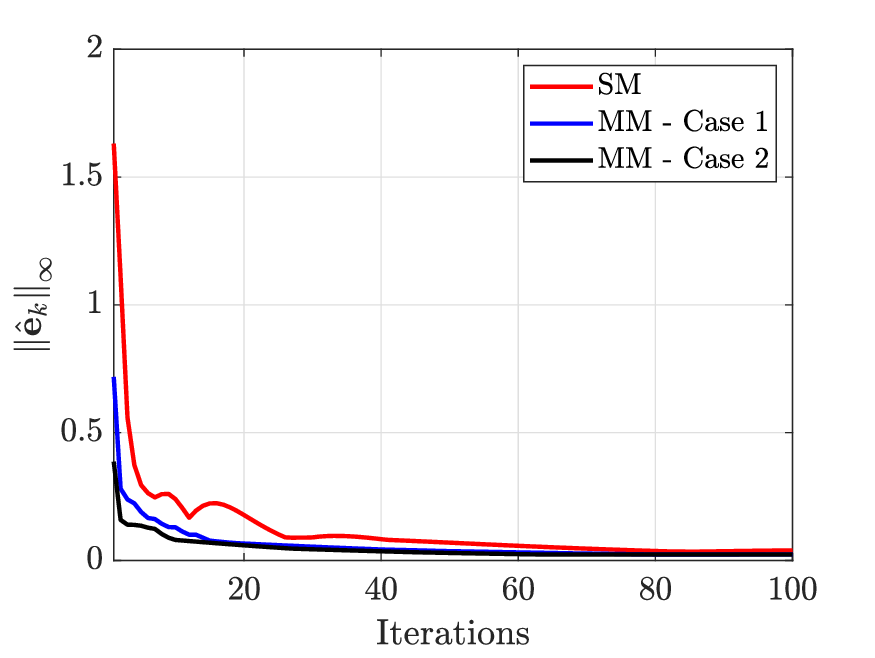}
	\caption{Maximum Identification Error over iterations.}
\end{subfigure}
\hfill
\begin{subfigure}{0.49\textwidth}
	\includegraphics[width = \textwidth]{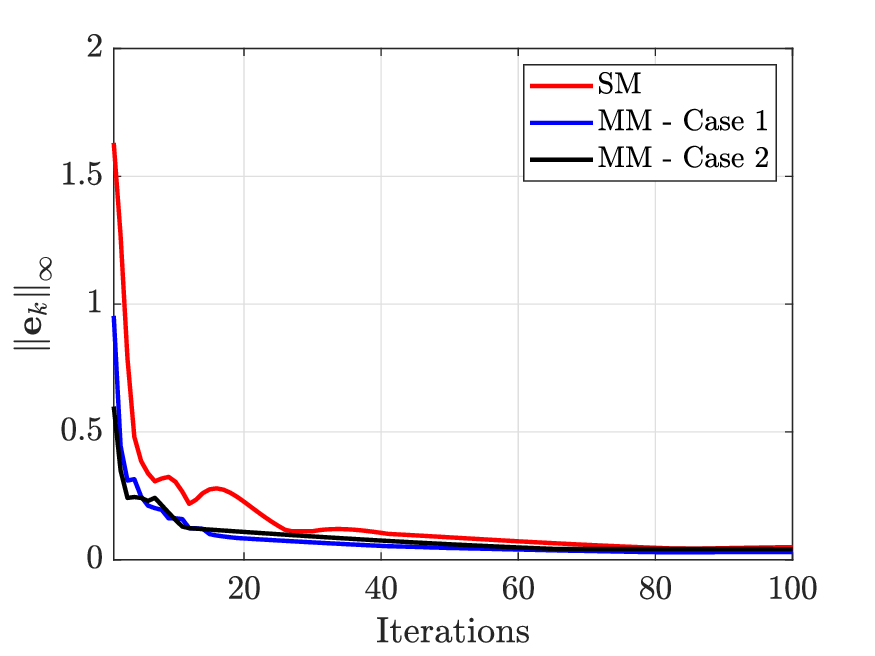}
	\caption{Maximum Tracking Error over iterations.}
\end{subfigure}

\caption{Example 3: Error profiles over iterations for a nonlinear system without
  disturbances.} 
\label{fig:NL}
\hrulefill
\vspace{-0.3cm}

\end{figure}

\begin{figure}[!t]
\centering

\begin{subfigure}{0.49\textwidth}
	\includegraphics[width = \textwidth]{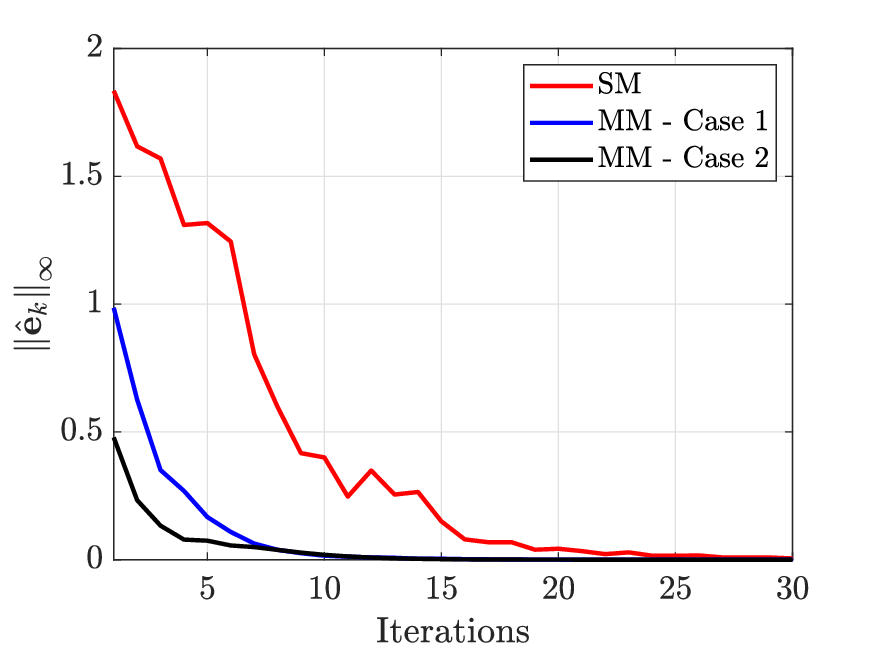}
	\caption{Maximum Identification Error over iterations.}
\end{subfigure}
\hfill
\begin{subfigure}{0.49\textwidth}
	\includegraphics[width = \textwidth]{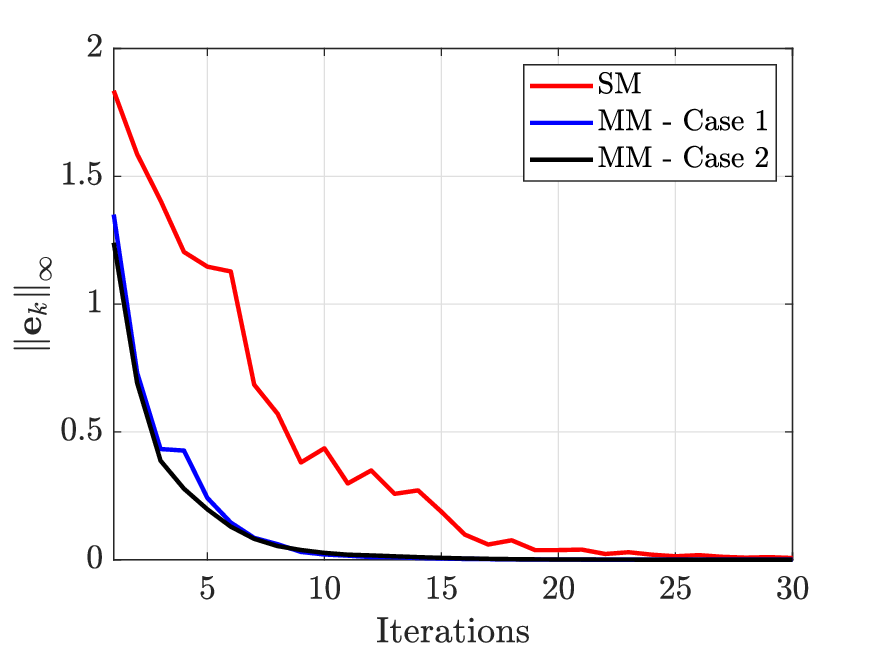}
	\caption{Maximum Tracking Error over iterations.}
\end{subfigure}

\caption{Example 4: Error profiles over iterations for a nonlinear system with disturbances.}
\label{fig:NL-D}
\hrulefill
\vspace{-0.3cm}

\end{figure}

\subsection{Example 2: LTV System with Disturbances} \label{sec:LTV-D}
In this example, the following LTV system with disturbances is considered:
\begin{equation} \label{eq:LTV-D}
	x_k(t+1) = \theta_1(t)x_k(t) + b(t)u_k(t) + d(t),
\end{equation}
where $\theta_1(t) = 1 + 0.5\sin(t)$, $b(t) = 3 + 0.5\sin(2\pi t)$ and
$d(t) = \sin^3(2\pi t)$, an external disturbance. The results for
achieving the tracking objective are shown in Fig.~\ref{fig:LTV-D}. As
before, the two multiple-model cases achieve faster convergence due to
improved transient response, whereas the single-model case has very
poor transient response due to large initial parametric errors. This
example also demonstrates the first instance of time-varying
parameters being successfully identified over iterations.

\subsection{Example 3: Nonlinear System without Disturbances} \label{sec:NL}
The following nonlinear system is considered:
\begin{equation} \label{eq:NL}
	x_k(t+1) = \theta_1(t)\sin^2\left(x_k(t)\right) + b(t)u_k(t),
\end{equation}
where $\theta_1(t) = 1 + 0.5\sin(t)$ and $b(t) = 3 + 0.5\sin(2\pi t)$,
as before. Note the nonlinearity in the regression vector, in contrast
to the previous examples. Fig.~\ref{fig:NL} shows the results for
identification and tracking performance for this system, in terms of
the peak amplitude of the errors. The two multiple-model cases are
seen to achieve faster convergence, and in particular, the strategy
``MM - Case $2$'' converges marginally faster due to higher frequency
of switching.

\subsection{Example 4: Nonlinear System with Disturbances} \label{sec:NL-D}
Finally, the most general system is considered:
\begin{equation} \label{eq:NL-D}
	x_k(t+1) = \theta_1(t)\sin^2\left(x_k(t)\right) + b(t)u_k(t) + d(t),
\end{equation}
where $\theta_1(t) = 1.5 + 0.5\sin(t)$, $b(t) = 3 + 0.5\sin(2\pi t)$
and $d(t) = \sin^3(2\pi t)$, the same disturbance considered in
\eqref{eq:LTV-D}. This is very similar to the system considered in
\cite{CHX2008}. Fig.~\ref{fig:NL-D} shows the performance for tracking
the reference \eqref{eq:RefTraj} over iterations, in terms of peak
identification and tracking errors. It is evident that the convergence
for both multiple-model cases is significantly faster than the 
single-model case, with MM - Case $2$ providing the fastest
convergence. Interestingly, the errors for this system also converge
faster than the errors for the system \eqref{eq:NL}, which was not
affected by disturbances. This is due to the presence of $d(t)$ and
its estimate, resulting in a persistently exciting control law.

\begin{figure}[!t]
\centering

\begin{subfigure}{0.49\textwidth}
	\includegraphics[width = \textwidth]{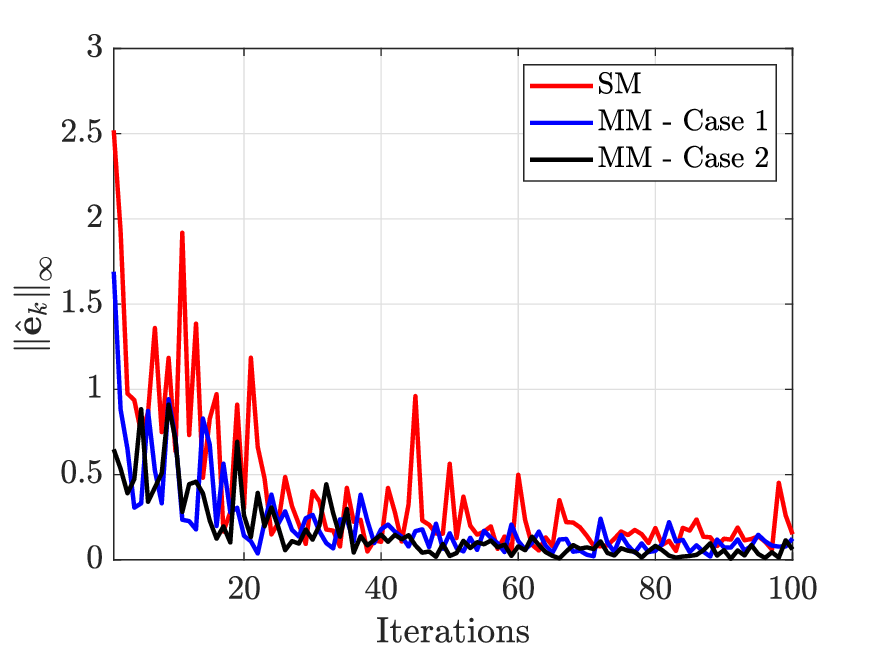}
	\caption{Maximum Identification Error over iterations.}
\end{subfigure}
\hfill
\begin{subfigure}{0.49\textwidth}
	\includegraphics[width = \textwidth]{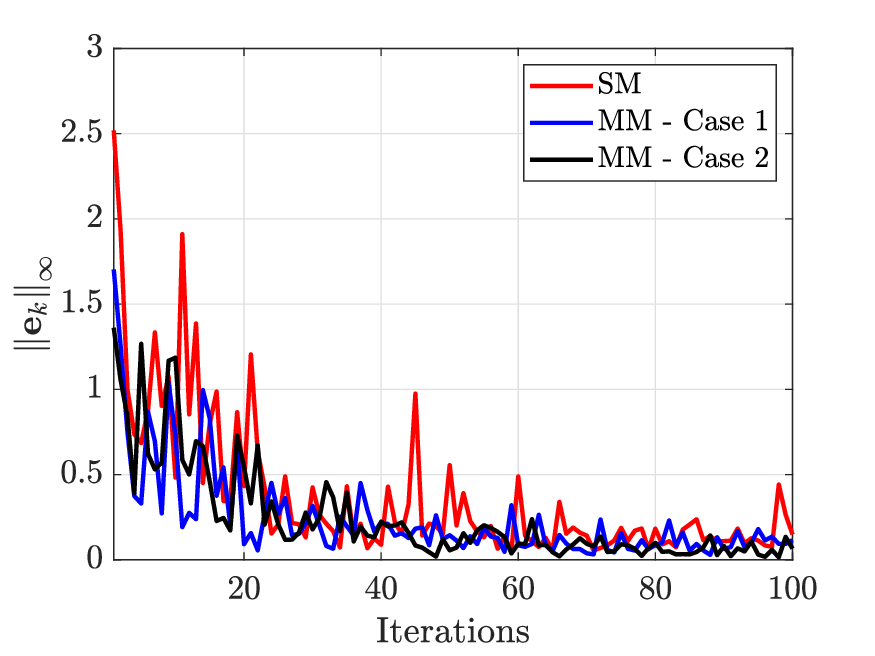}
	\caption{Maximum Tracking Error over iterations.}
\end{subfigure}

\caption{Example 5: Error profiles over iterations for a nonlinear system with
  disturbances, with an iteration-varying reference trajectory.} 
\label{fig:NL-D-IV}
\hrulefill

\end{figure}

\subsection{Example 5: Iteration-varying Reference Trajectory}
We also present results for the system \eqref{eq:NL-D} tracking an
iteration-varying reference trajectory:
\begin{equation} \label{eq:RefTrajIV}
x_{m, k}(t) = \vartheta(k)\pi^2\left(2 - 3\sin^3(2\pi
t/100)\right)\sin(2\pi t/100)/10, 
\end{equation}
where $\vartheta(k) \sim \mathcal{U}[-0.5, 0.5]$, i.e. a uniformly
distributed random variable between $-0.5$ and $0.5$, in each
iteration $k$. This is similar to the trajectory considered in \cite{CHX2008}.
The results for this example are shown in
Fig.~\ref{fig:NL-D-IV}. It is easily seen that all errors are
decreasing and are quite close to $0$. Further, the two 
multiple-model strategies are seen to perform better than the single-model
strategy. To reinforce this, Table~\ref{tab:Metric} presents the 
root-mean-square value of peak error amplitudes over iterations for
identification and tracking errors for all three strategies. This
metric is defined below for the identification error:
\begin{equation} \label{eq:Metric}
\mathrm{Metric} = \sqrt{\frac{1}{K}\sum_{k = 1}^{K}
  \left\|\hat{\mathbf{e}}_k\right\|^{2}_{\infty}}, 
\end{equation}
and defined similarly for the tracking error. $K$ denotes the total
number of iterations in the simulation. Lower values of this metric
indicate better performance. From Table~\ref{tab:Metric}, it is
evident that both multiple-model strategies have significantly smaller
values of this metric, indicating smaller identification and tracking
errors and faster convergence. The smallest values are in MM - Case
$2$, indicating that this strategy achieves the best possible
performance. Further, Table~\ref{tab:Metric2} presents
the number of iterations $k^*$ taken for tracking convergence.
In particular, we consider the number of iterations taken for the peak
tracking error $\left\|\mathbf{e}_k\right\|_{\infty}$ to fall below
$2\%$ of the first iteration peak tracking error
$\left\|\mathbf{e}_1\right\|_{\infty}$ of the single model strategy. It is
once again evident that both multiple model strategies converge faster, with
MM - Case $2$ converging fastest. As mentioned earlier, this is partly because
the initialization of multiple estimation models leads to better estimates
in earlier iterations, leading to faster convergence.

In conclusion, all simulation examples demonstrate that the proposed
strategies result in convergence of identification and tracking errors
to zero. The multiple model strategies converge much faster than the
single model strategy, and the multiple model strategy with criterion
\eqref{eq:Case2} converges faster than that with criterion
\eqref{eq:Case1}. Note how the initial error magnitudes were larger
for the linear systems compared to the nonlinear systems. This is due
to the presence of the nonlinearity $\sin^2\left(x_k(t)\right)$, which
is always bounded between $0$ and $1$ irrespective of the value of
$x_k(t)$. The nonlinear system subjected to disturbances also shows
faster error convergence than the system without disturbances.

\begin{table}
\centering

\caption{Root Mean-Square Maximum Errors}
\begin{tabular}{lcc} \toprule
Strategy & Identification Error & Tracking Error \\ \midrule
Single Model & $6.3288$ & $6.2884$ \\
Multiple Models -- Case $1$ & $3.0544$ & $3.4948$ \\
Multiple Models -- Case $2$ & $2.0728$ & $3.1338$ \\ \bottomrule
\end{tabular}

\label{tab:Metric}
\vspace{0.2cm}

\end{table}

\begin{table}
\centering

\caption{Iterations for Tracking Convergence}
\begin{tabular}{lc} \toprule
Strategy & Number of Iterations $k^*$ \\ \midrule
Single Model & $79$ \\
Multiple Models -- Case $1$ & $58$ \\
Multiple Models -- Case $2$ & $47$ \\ \bottomrule
\end{tabular}

\label{tab:Metric2}
\vspace{0.4cm}
\hrulefill
\vspace{-0.3cm}

\end{table}

\section{Concluding Remarks} \label{sec:Conclusion}
In this article, we have proposed a complete framework for using the
Multiple Models, Switching and Tuning (MMST) methodology in the
context of discrete-time Adaptive Iterative Learning Control
(ILC). First, the single estimation model case is considered, a
new control and identification scheme is presented, and convergence is
proved using the properties of square-summable, or $\ell_2$
sequences. The update law for parameter estimates uses the
identification error rather than the tracking error, in contrast to
existing Adaptive ILC schemes. This enables the extension to multiple
estimation models. In the case of multiple models, we have described two
criteria for switching between models --- either at the end of each
iteration or at each sample. In both options, convergence is proved
in a unified manner using the properties of square-summable
sequences. An extensive set of simulation results are presented for
four different types of systems. In all cases, it is seen that the
identification and tracking errors converge to zero. In particular, it
is seen that the second switching criterion for multiple-models
outperforms the first, which in turn outperforms the single model
case.

A drawback of the strategies presented here is their high
computational complexity, particularly for the second switching
criterion \eqref{eq:Case2} in multiple models. It is known that the
Multiple Models with Second-Level Adaptation (MM-SLA) scheme
\cite{NH2011, MRG2018} has lower computational complexity compared to
MMST, as a much smaller number of estimation models, is required. This
was explored for Adaptive ILC in a contraction-mapping setting in
\cite{PBHMG21b}, and an interesting avenue for future work is to
explore this in the context of CEF-based Adaptive ILC. Further, as
mentioned in Section \ref{sec:Introduction}, the techniques proposed
here can be extended to the case with time- and iteration-varying
parameters, and also to the case when the control direction is
unknown.


\begin{thebibliography}{10}
\providecommand{\url}[1]{#1}
\csname url@samestyle\endcsname
\providecommand{\newblock}{\relax}
\providecommand{\bibinfo}[2]{#2}
\providecommand{\BIBentrySTDinterwordspacing}{\spaceskip=0pt\relax}
\providecommand{\BIBentryALTinterwordstretchfactor}{4}
\providecommand{\BIBentryALTinterwordspacing}{\spaceskip=\fontdimen2\font plus
\BIBentryALTinterwordstretchfactor\fontdimen3\font minus
  \fontdimen4\font\relax}
\providecommand{\BIBforeignlanguage}[2]{{%
\expandafter\ifx\csname l@#1\endcsname\relax
\typeout{** WARNING: IEEEtran.bst: No hyphenation pattern has been}%
\typeout{** loaded for the language `#1'. Using the pattern for}%
\typeout{** the default language instead.}%
\else
\language=\csname l@#1\endcsname
\fi
#2}}
\providecommand{\BIBdecl}{\relax}
\BIBdecl

\small{
\bibitem{Xu2011}
J.-X. Xu, ``A survey on iterative learning control for nonlinear systems,''
  \emph{International Journal of Control}, vol.~84, no.~7, pp. 1275--1294,
  2011.

\bibitem{ACM2007}
H.-S. Ahn, Y.~Chen, and K.~L. Moore, ``Iterative learning control: {Brief}
  survey and categorization,'' \emph{IEEE Transactions on Systems, Man, and
  Cybernetics -- Part C: Applications and Reviews}, vol.~37, no.~6, pp.
  1099--1121, Nov. 2007.

\bibitem{KLM93}
K.~L. Moore, \emph{Iterative Learning Control for Deterministic Systems}.\hskip
  1em plus 0.5em minus 0.4em\relax London, United Kingdom: Springer-Verlag,
  1993.

\bibitem{BTA2006}
D.~A. Bristow, M.~Tharayil, and A.~G. Alleyne, ``A survey of iterative learning
  control,'' \emph{IEEE Control Systems Magazine}, vol.~26, no.~3, pp. 96--114,
  Jun. 2006.

\bibitem{BX1998}
Z.~Bien and J.-X. Xu, Eds., \emph{Iterative Learning Control: Analysis, Design,
  Integration and Applications}.\hskip 1em plus 0.5em minus 0.4em\relax New
  York, NY, USA: Springer Science + Business Media, 1998.

\bibitem{AKM84}
S.~Arimoto, S.~Kawamura, and F.~Miyazaki, ``Bettering operation of robots by
  learning,'' \emph{Journal of Robotic Systems}, vol.~1, no.~2, pp. 123--140,
  1984.

\bibitem{CM2002}
Y.~Chen and K.~L. Moore, ``{An optimal design of PD-type iterative learning
  control with monotonic convergence},'' in \emph{Proceedings of the IEEE
  International Symposium on Intelligent Control}, Vancouver, Canada, Oct.
  2002, pp. 55--60.

\bibitem{CWS97}
Y.~Chen, C.~Wen, and M.~Sun, ``A robust high-order {P-type} iterative learning
  controller using current iteration tracking error,'' \emph{International
  Journal of Control}, vol.~68, no.~2, pp. 331--342, 1997.

\bibitem{CL1996}
C.-J. Chien and J.-S. Liu, ``{A P-type iterative learning controller for robust
  output tracking of nonlinear time-varying systems},'' \emph{International
  Journal of Control}, vol.~64, no.~2, pp. 319--334, 1996.

\bibitem{WP18}
Z.~Wang, C.~P. Pannier, K.~Barton, and D.~J. Hoelzle, ``Application of robust
  monotonically convergent spatial iterative learning control to microscale
  additive manufacturing,'' \emph{Mechatronics}, vol.~56, pp. 157--165, Dec.
  2018.

\bibitem{AA21}
A.~A. Armstrong and A.~G. Alleyne, ``A multi-input single-output iterative
  learning control for improved material placement in extrusion-based additive
  manufacturing,'' \emph{Control Engineering Practice}, vol. 111, pp. 104\,783:
  1--11, Jun. 2021.

\bibitem{LO2020}
A.~Laracy and H.~Ossareh, ``Constraint management for batch processes using
  iterative learning control and reference governors,'' in \emph{Proceedings of
  the 2nd Conference on Learning for Dynamics and Control}, Jun. 2020, pp.
  340--349.

\bibitem{LMKL21}
X.~Liu, L.~Ma, X.~Kong, and K.~Lee, ``Robust model predictive iterative
  learning control for iteration-varying-reference batch processes,''
  \emph{IEEE Transactions on Systems, Man and Cybernetics: Systems}, vol.~51,
  no.~7, pp. 4238--4250, Jul. 2021.

\bibitem{CWOH2011}
Z.~H. Chen, Y.~Wang, P.~Ouyang, J.~Huang, and W.~J. Zhang, ``A novel
  iteration-based controller for hybrid machine systems for trajectory tracking
  at the end-effector level,'' \emph{Robotica}, vol.~29, no.~2, pp. 317--324,
  March 2011.

\bibitem{OZG2006}
P.~R. Ouyang, W.~J. Zhang, and M.~M. Gupta, ``An adaptive switching learning
  control method for trajectory tracking of robotic manipulators,''
  \emph{Mechatronics}, vol.~16, no.~1, pp. 51--61, February 2006.

\bibitem{CWC2020}
S.~Chen, Z.~Wang, A.~Chakraborty, M.~Klecka, G.~Saunders, and J.~Wen, ``Robotic
  deep rolling with iterative learning motion and force control,'' \emph{IEEE
  Robotics and Automation Letters}, vol.~5, no.~4, pp. 5581--5588, Oct. 2020.

\bibitem{CC2020}
Y.~Chen, B.~Chu, C.~T. Freeman, and Y.~Liu, ``{Generalized iterative learning
  control with mixed system constraints: A gantry robot based verification},''
  \emph{Control Engineering Practice}, vol.~95, pp. 104\,260: 1--11, Feb. 2020.

\bibitem{MA2021}
R.~Mengacci, F.~Angelini, M.~G. Catalano, G.~Grioli, A.~Bicchi, and
  M.~Garabini, ``On the motion/stiffness decoupling property of articulated
  soft robots with application to model-free torque iterative learning
  control,'' \emph{The International Journal of Robotics Research}, vol.~40,
  no.~1, pp. 348--374, Jan. 2021.

\bibitem{NG2002}
M.~Norrl\"of and S.~Gunnarsson, ``Time and frequency domain convergence
  properties in iterative learning control,'' \emph{International Journal of
  Control}, vol.~75, no.~14, pp. 1114--1126, 2002.

\bibitem{AMC2007}
H.-S. Ahn, K.~L. Moore, and Y.~Chen, \emph{Iterative Learning Control:
  Robustness and Monotonic Convergence for Interval Systems}.\hskip 1em plus
  0.5em minus 0.4em\relax London, United Kingdom: Springer-Verlag, 2007.

\bibitem{DW1998}
D.~Wang, ``Convergence and robustness of discrete time nonlinear systems with
  iterative learning control,'' \emph{Automatica}, vol.~34, no.~11, pp.
  1445--1448, Nov. 1998.

\bibitem{LB1996}
H.-S. Lee and Z.~Bien, ``Study on robustness of iterative learning control with
  non-zero initial error,'' \emph{International Journal of Control}, vol.~64,
  no.~3, pp. 345--349, 1996.

\bibitem{SBT2022}
Z.~Shahriari, B.~Bernhardsson, and O.~Troeng, ``Convergence analysis of
  iterative learning control using pseudospectra,'' \emph{International Journal
  of Control}, vol.~95, no.~1, pp. 269--281, 2022.

\bibitem{XT2001}
J.-X. Xu and Y.~Tan, ``A suboptimal learning control scheme for non-linear
  systems with time-varying parametric uncertainties,'' \emph{Optimal Control
  Applications \& and Methods}, vol.~22, no.~3, pp. 111--126, May/Jun. 2001.

\bibitem{XT2002}
------, ``A composite energy function-based learning control approach for
  nonlinear systems with time-varying parametric uncertainties,'' \emph{IEEE
  Transactions on Automatic Control}, vol.~47, no.~11, pp. 1940--1945, Nov.
  2002.

\bibitem{FR2000}
M.~French and E.~Rogers, ``Non-linear iterative learning by an adaptive
  {Lyapunov} technique,'' \emph{International Journal of Control}, vol.~73,
  no.~10, pp. 840--850, 2000.

\bibitem{Xu2002}
J.-X. Xu, ``The frontiers of iterative learning control - {II},''
  \emph{Systems, Control and Information}, vol.~46, no.~5, pp. 233--243, 2002.

\bibitem{XTL2003}
J.-X. Xu, Y.~Tan, and T.-H. Lee, ``Iterative learning control design based on
  composite energy function with input saturation,'' in \emph{Proceedings of
  the 2003 American Control Conference}, Denver, CO, USA, Jun. 2003, pp.
  5129--5134.

\bibitem{TC2007}
A.~Tayebi and C.-J. Chien, ``A unified adaptive iterative learning control
  framework for uncertain nonlinear systems,'' \emph{IEEE Transactions on
  Automatic Control}, vol.~52, no.~10, pp. 1907--1913, Oct. 2007.

\bibitem{AT2003}
A.~Tayebi, ``Adaptive iterative learning control for robot manipulators,'' in
  \emph{Proceedings of the 2003 American Control Conference}, Denver, CO, USA,
  Jun. 2003, pp. 4518--4523.

\bibitem{AT2004}
------, ``Adaptive iterative learning control for robot manipulators,''
  \emph{Automatica}, vol.~40, no.~7, pp. 1195--1203, Jul. 2004.

\bibitem{LSWT19}
R.~Lee, L.~Sun, Z.~Wang, and M.~Tomizuka, ``Adaptive iterative learning control
  of robot manipulators for friction compensation,'' in \emph{8th IFAC
  Symposium on Mechatronic Systems (MECHATRONICS) 2019}, Vienna, Austria, Sep.
  2019, pp. 175--180.

\bibitem{WYC19}
L.~Wu, Q.~Yan, and J.~Cai, ``Neural network-based adaptive learning control for
  robot manipulators with arbitrary initial errors,'' \emph{IEEE Access},
  vol.~7, pp. 180\,194--180\,204, Dec. 2019.

\bibitem{HCMS21}
D.~Huang, Y.~Chen, D.~Meng, and P.~Sun, ``Adaptive iterative learning control
  for high-speed train: {A} multi-agent approach,'' \emph{IEEE Transactions on
  Systems, Man and Cybernetics: Systems}, vol.~51, no.~7, pp. 4067--4077, Jul.
  2021.

\bibitem{YH21}
Q.~Yu and Z.~Hou, ``Adaptive fuzzy iterative learning control for high-speed
  trains with both randomly varying operation lengths and system constraints,''
  \emph{IEEE Transactions on Fuzzy Systems}, vol.~29, no.~8, pp. 2408--2418,
  Aug. 2021.

\bibitem{LH21}
G.~Liu and Z.~Hou, ``{RBFNN}-based adaptive iterative learning fault-tolerant
  control for subway trains with actuator faults and speed constraint,''
  \emph{IEEE Transactions on Systems, Man and Cybernetics: Systems}, vol.~51,
  no.~9, pp. 5785--5799, Sep. 2021.

\bibitem{HM19}
W.~He, T.~Meng, S.~Zhang, J.-K. Liu, G.~Li, and C.~Sun, ``Dual-loop adaptive
  iterative learning control for a {Timoshenko} beam with output constraint and
  input backlash,'' \emph{IEEE Transactions on Systems, Man and Cybernetics:
  Systems}, vol.~49, no.~5, pp. 1027--1038, May 2019.

\bibitem{FL21}
J.~Feng, Z.~Liu, X.~He, Q.~Li, and W.~He, ``Vibration suppression of a
  high-rise building with adaptive iterative learning control,'' \emph{IEEE
  Transactions on Neural Networks and Learning Systems}, vol.~34, no.~8, pp.
  4261--4272, Aug. 2023.

\bibitem{CHX2008}
R.~Chi, Z.~Hou, and J.~Xu, ``Adaptive {ILC} for a class of discrete-time
  systems with iteration-varying trajectory and random initial condition,''
  \emph{Automatica}, vol.~44, no.~8, pp. 2207--2213, Aug. 2008.

\bibitem{Goodwin}
G.~C. Goodwin and K.~S. Sin, \emph{Adaptive Filtering Prediction and
  Control}.\hskip 1em plus 0.5em minus 0.4em\relax Englewood Cliffs, NJ:
  Prentice Hall, 1984.

\bibitem{CSH2008}
R.~Chi, S.-L. Sui, and Z.-H. Hou, ``A new discrete-time adaptive {ILC} for
  nonlinear systems with time-varying parametric uncertainties,'' \emph{Acta
  Automatica Sinica}, vol.~34, no.~7, pp. 805--808, Jul. 2008.

\bibitem{SLH2012}
M.~Sun, X.~Liu, and H.~He, ``Adaptive iterative learning control for {SISO}
  discrete time-varying systems,'' in \emph{2012 12th International Conference
  on Control, Automation, Robotics \& Vision}, Guangzhou, China, Dec. 2012, pp.
  58--63.

\bibitem{LZ2015}
B.~Liu and W.~Zhou, ``On adaptive iterative learning control algorithm for
  discrete-time systems with parametric uncertainties subject to second-order
  internal model,'' in \emph{10th International Conference on Computer Science
  \& Education (ICCSE 2015)}, Cambridge, United Kingdom, Jul. 2015, pp.
  512--517.

\bibitem{YZQ2012}
M.~Yu, J.~Zhang, and D.~Qi, ``Discrete-time adaptive iterative learning control
  with unknown control directions,'' \emph{International Journal of Control,
  Automation and Systems}, vol.~10, no.~6, pp. 1111--1118, Dec. 2012.

\bibitem{YWQ2013}
M.~Yu, J.~Wang, and D.~Qi, ``Discrete-time adaptive iterative learning control
  for high-order nonlinear systems with unknown control directions,''
  \emph{International Journal of Control}, vol.~86, no.~2, pp. 299--308, Feb.
  2013.

\bibitem{YS2012}
W.~Yan and M.~Sun, ``Adaptive iterative learning control of discrete-time
  varying systems with unknown control direction,'' \emph{International Journal
  of Adaptive Control and Signal Processing}, vol.~27, no.~4, pp. 340--348,
  Apr. 2012.

\bibitem{YHH2016}
M.~Yu, D.~Huang, and W.~He, ``Robust adaptive iterative learning control for
  discrete-time nonlinear systems with both parametric and nonparametric
  uncertainties,'' \emph{International Journal of Adaptive Control and Signal
  Processing}, vol.~30, no.~7, pp. 972--985, Jul. 2016.

\bibitem{YL2017}
M.~Yu and C.~Li, ``Robust adaptive iterative learning control for discrete-time
  nonlinear systems with time-iteration-varying parameters,'' \emph{IEEE
  Transactions on Systems, Man and Cybernetics: Systems}, vol.~47, no.~7, pp.
  1737--1745, Jul. 2017.

\bibitem{BH18}
X.~Bu and Z.~Hou, ``Adaptive iterative learning control for linear systems with
  binary-valued observations,'' \emph{IEEE Transactions on Neural Networks and
  Learning Systems}, vol.~29, no.~1, pp. 232--237, Jan. 2018.

\bibitem{YHPD21}
X.~Yu, Z.~Hou, M.~M. Polycarpou, and L.~Duan, ``Data-driven iterative learning
  control for nonlinear discrete-time {MIMO} systems,'' \emph{IEEE Transactions
  on Neural Networks and Learning Systems}, vol.~32, no.~3, pp. 1136--1148,
  Mar. 2021.

\bibitem{YH20}
Q.~Yu, Z.~Hou, X.~Bu, and Q.~Yu, ``{RBFNN}-based data-driven predictive
  iterative learning control for nonaffine nonlinear systems,'' \emph{IEEE
  Transactions on Neural Networks and Learning Systems}, vol.~31, no.~4, pp.
  1170--1182, Apr. 2020.

\bibitem{NB1992}
K.~S. Narendra and J.~Balakrishnan, ``Performance improvement in adaptive
  control systems using multiple models and switching,'' in \emph{Proceedings
  of the Seventh Yale Workshop on Adaptive and Learning Systems}, Center for
  Systems Science, Yale University, New Haven, CT, USA, May 1992, pp. 27--33.

\bibitem{MM-CT}
------, ``Adaptive control using multiple models,'' \emph{IEEE Transactions on
  Automatic Control}, vol.~42, no.~2, pp. 171--187, Feb. 1997.

\bibitem{MM-DT}
K.~S. Narendra and C.~Xiang, ``Adaptive control of discrete-time systems using
  multiple models,'' \emph{IEEE Transactions on Automatic Control}, vol.~45,
  no.~9, pp. 1669--1686, Sep. 2000.

\bibitem{MG2021}
R.~Makam and K.~George, ``Convex combination of multiple models for
  discrete-time adaptive control,'' \emph{International Journal of Systems
  Science}, vol.~53, no.~4, pp. 743--756, 2022.

\bibitem{MG2024}
------, ``Multiple models for decentralised adaptive control of discrete-time
  systems,'' \emph{International Journal of Adaptive Control and Signal
  Processing}, pp. 1--16, 2024, early access.

\bibitem{LW2012}
L.~Xiaoli and Z.~Wen, ``Multiple model iterative learning control,''
  \emph{Neurocomputing}, vol.~73, no. 13--15, pp. 2439--2445, Aug. 2012.

\bibitem{LWL2014}
X.~Li, K.~Wang, and D.~Liu, ``An improved result of multiple model iterative
  learning control,'' \emph{IEEE/CAA Journal of Automatica Sinica}, vol.~1,
  no.~3, pp. 315--322, Jul. 2014.

\bibitem{FF2015}
C.~Freeman and M.~French, ``Estimation based multiple model iterative learning
  control,'' in \emph{2015 54th IEEE Conference on Decision and Control (CDC)},
  Osaka, Japan, Dec. 2015, pp. 6070--6075.

\bibitem{PBHMG21a}
R.~Padmanabhan, M.~Bhushan, K.~K. Hebbar, R.~Makam, and K.~George, ``A novel
  strategy with multiple models to improve performance of adaptive iterative
  learning control,'' in \emph{2021 IEEE International Conference on
  Electronics, Computing and Communication Technologies (CONECCT)}, Bengaluru,
  India, Jul. 2021.

\bibitem{PBHMG21b}
------, ``Second-level adaptation and optimization for multiple model adaptive
  iterative learning control,'' in \emph{2021 Seventh Indian Control Conference
  (ICC)}, Mumbai, India, Dec. 2021.

\bibitem{NH2011}
K.~S. Narendra and Z.~Han, ``Discrete-time adaptive control using multiple
  models,'' in \emph{Proceedings of the 2011 American Control Conference}, San
  Francisco, CA, USA, Jun.-Jul. 2011, pp. 2921--2926.

\bibitem{MRG2018}
R.~Makam, S.~Ramaiah, and K.~George, ``Stability analysis of deterministic
  discrete-time adaptive systems with second level adaptation,'' in \emph{2018
  International Conference on Signals and Systems (ICSigSys)}, Bali, Indonesia,
  May 2018, pp. 167--173.

\bibitem{LN1986}
T.-H. Lee and K.~S. Narendra, ``Stable discrete adaptive control with unknown
  high-frequency gain,'' \emph{IEEE Transactions on Automatic Control},
  vol.~31, no.~5, pp. 477--479, May 1986.

\bibitem{QGX2022}
Y.~Qi, H.~Geng, and N.~Xing, ``Adaptive learning control for triggered switched
  systems based on unknown direction control gain function,''
  \emph{International Journal of Control}, vol.~97, no.~2, pp. 175--186, Feb.
  2024.
}

\end{thebibliography}
\end{document}